\author{Paul \textsc{Poncet}}
\address{CMAP, \'{E}cole Polytechnique, Route de Saclay, 91128 Palaiseau Cedex, France}
\email{poncet@cmap.polytechnique.fr}
\def\twoheaddownarrow{\rlap{$\downarrow$}\raise-.5ex\hbox{$\downarrow$}}%%
\def\twoheaduparrow{\rlap{$\uparrow$}\raise.5ex\hbox{$\uparrow$}}%%
\newcommand{\cc}{\mathfrak{c}}
\newcommand{\Z}{\mathsf{Z}}
\newtheorem*{theorem*}{Theorem}
\newtheorem{theorem}{Theorem}[section]
\newtheorem{corollary}[theorem]{Corollary}
\newtheorem{proposition}[theorem]{Proposition}
\newtheorem{lemma}[theorem]{Lemma}
\theoremstyle{definition}
\newtheorem{definition}[theorem]{Definition}
\newtheorem{example}[theorem]{Example}
\newtheorem{remark}[theorem]{Remark}
\newenvironment{acknowledgements}[1][]{\par\vspace{0.5cm}\noindent\textbf{Acknowledgements#1.} }{\par}
\begin{document}

\title{Enriched closure spaces \\ as a novel framework \\ for domain theory} 
\date{\today}

\subjclass[2010]{06B35, %Continuous lattices and posets, applications 
				 54A05} %Topological spaces and generalizations (closure spaces, etc.)

\keywords{preclosure spaces, closure spaces, closure operators, continuous posets, strongly continuous posets, Z-continuous posets, interpolation property, domains, continuous lattices}

\begin{abstract}
We propose a generalization of continuous lattices and domains through the concept of enriched closure space, defined as a closure space equipped with a preclosure operator satisfying some compatibility conditions. % in the continuation of the work of M.\ Ern\'e. 
In this framework we are able to define a notion of way-below relation; an appropriate definition of continuity then naturally follows.  Characterizations of continuity of the enriched closure space and necessary and sufficient conditions for the interpolation property are proved. 
We also draw a link between continuity and the possibility for the subsets that are open with respect to the preclosure operator to form a topology. 
\end{abstract}

\maketitle

\section{Introduction}

The theory of domains and continuous lattices was initiated in the early 1970's with the pioneering work of Dana S.\ Scott \cite{Scott72} and motivated by the search for a denotational semantics of lambda calculus. 
Since then it has been steadily developed and has found numerous applications in order theory, theoretical computer science, and non-Hausdorff topology -- one may refer to the monographs by Gierz et al.\ \cite{Gierz03}, Abramsky and Jung \cite{Abramsky94}, and Goubault-Larrecq \cite{Goubault13} for a deep dive.  %sound understanding. 
%We mention that other fields of application have been addressed, 
Additional fields of application include measure and integration (Edalat \cite{Edalat95b}, Howroyd \cite{Howroyd00}, Lawson and Lu \cite{Lawson03}), probability and large deviations (Norberg \cite{Norberg89},  \cite{Norberg97}, Jonasson \cite{Jonasson98}, Akian \cite{Akian99}), general relativity (Martin and Panangaden \cite{Martin06}), tropical convexity and idempotent mathematics (Lawson \cite{Lawson04b}, Poncet \cite{Poncet11}), to name but a few. 

The efficiency of this theory has pushed mathematicians to explore generalizations of the concepts of way-below relation and continuity. 
One fruitful proposal has consisted in replacing the family $\mathrsfs{D}$ of directed subsets with supremum, which plays a key role in classical domain theory, by some other family $\mathfrak{M}$ of subsets. 
This idea has been at the core of many interesting variants to domain theory and has brought a unified point of view on several preexisting order-theoretic results. 
Apart from $\mathrsfs{D}$, one may for instance pick $\mathfrak{M}$ among the following collections: 
\begin{itemize}
%	\item the singletons; 
	\item the finite subsets (Martinez \cite{Martinez72}, Ern\'e \cite{Erne81}); 
	\item the totally ordered subsets with supremum %(Markowsky and Rosen \cite{Markowsky76b}, Markowsky \cite{Markowsky77}, 
	(Markowsky \cite{Markowsky81a}); %\cite{Markowsky81b}); 
	\item the subsets with supremum (Raney \cite{Raney52}, \cite{Raney53}, Ern\'e et al.\ \cite{Erne06}); 
	\item the countably directed subsets with supremum (Han et al.\ \cite{Han89});
	\item the Scott-closed subsets (see Ho and Zhao \cite{Ho09}).  
\end{itemize}
Bandelt \cite{Bandelt82}, Bandelt and Ern\'e \cite{Bandelt84}, and Xu \cite{Xu95} have dealt with the general situation of an unspecified collection $\mathfrak{M}$ fulfilling suitable properties.  
In the same line, $\mathfrak{M}$ has also been taken of the form $\Z[P]$ for a given \textit{subset system} $\Z$ as defined by Wright et al.\ \cite{Wright78}, that is a function $\Z$ assigning to each poset $P$ a collection $\Z[P]$ of subsets of $P$ with certain functorial properties; see e.g.\ Nelson \cite{Nelson81}, Bandelt and Ern\'e \cite{Bandelt83}, Venugopalan \cite{Venugopalan86}, Baranga \cite{Baranga96}, Ern\'e \cite{Erne99}. 

These general $\mathfrak{M}$ or $\Z$ frameworks %, though the high level of abstraction they offer, 
are not fully satisfying yet, for they do not encompass certain situations encountered in the literature. 
For instance, Novak \cite{Novak82a} considered the relation defined on a poset $P$ by $x \ll y$ if $y \leqslant \ell(a)$ implies $x \leqslant r(a)$ for all $a \in Q$, for a given poset $Q$ and maps $\ell, r \colon Q \rightarrow P$ with appropriate features. 
This relation and the notion of continuity it involves cannot be depicted with the terms of $\Z$ theory. 

%\todo{ autre exemple que Novak : directed spaces (Hui Kou) }

%la Z theorie est limitee (rappeler les definitions...). En particulier, expliquer que la selection des parties dirigees avec supremum n'est pas un subset system. Expliquer aussi que certains exemples ou une notion de continuite est intuitive ne peuvent pas etre traites non plus. 

%Il faut souligner que le concept de semicontinuous lattice de Zhao ne rentre pas dans le cadre des $Z$-continuous posets !! 

To overcome this hurdle, we turn towards the concept of (pre)closure operator, the role of which in domain theory has already been stressed at several occasions (see Ern\'e \cite{Erne04}, \cite{Erne09}). 
A \textit{preclosure operator} on a set $E$ is a map $\cc \colon 2^E \rightarrow 2^E$ such that $A \subset \cc(A) \subset \cc(B)$, for all subsets $A$ and $B$ such that $A \subset B$. 
A subset $A$ is then \textit{$\cc$-closed} if $\cc(A) = A$, and \textit{$\cc$-open} if its complement is $\cc$-closed. 
A \textit{closure operator} $\cc$ is a preclosure operator which is also \textit{idempotent}, in the sense that $\cc(\cc(A)) = \cc(A)$ for all subsets $A$. 
On a poset, some natural closure operators are at hand, especially the \textit{Alexandrov closure operator} 
\[
A \mapsto \downarrow\!\! A := \{ x : x \leqslant a \mbox{ for some } a \in A \}, 
\] 
and the \textit{Dedekind--MacNeille closure operator} 
\[
A \mapsto A^{\uparrow \downarrow} := \{ x : x \leqslant u \mbox{ for all upper bounds $u$ of $A$ } \}.
\] 
Note that, if $A$ admits a supremum $\bigvee A$, then $A^{\uparrow \downarrow} = \downarrow\!\! \bigvee A$.  
These two closure operators act silently in the context of completely distributive lattices: 
given a complete lattice $L$ equipped with the relation $\vartriangleleft$ defined by $x \vartriangleleft y$ if $y \leqslant \bigvee A$ implies $x \in \downarrow\!\! A$ for all subsets $A$, Raney \cite{Raney53} proved that $L$ is completely distributive if and only if, for all $x \in L$, $x = \bigvee \{ y : y \vartriangleleft x \}$, or equivalently $x \in \{ y : y \vartriangleleft x \}^{\uparrow\downarrow}$. 
Ern\'e and Wilke \cite{Erne83} generalized this situation with the concept of core space: let $E$ be a set equipped with a closure operator $\cc$, and let $\leqslant$ be its specialization quasiorder defined by $x \leqslant y$ if $x \in \cc(\{y\})$. The \textit{core relation} $\vartriangleleft$ on $E$ is defined by $x \vartriangleleft y$ if $y \in \cc(A)$ implies $x \in \downarrow\!\! A$ for all subsets $A$. Then $(E, \cc)$ is a \textit{core space} if $x \in \cc(\{ y : y \vartriangleleft x \})$ for all $x \in E$. 

In this paper we bring the use of closure and preclosure operators to a further level of abstraction. % to core spaces with the help of closure and preclosure operators. 
The central structure in our approach is a triplet $(P, [\cdot], \cc)$, that we call an \textit{enriched closet}, where $(P, [\cdot])$ is a closure space and $\cc$ is a preclosure operator on $P$ satisfying the compatibility conditions $[\cc(A)] = \cc(A) = \cc([A])$ for all subsets $A$. 
In this setting, the \textit{way-below relation} $\ll$  is defined by $x \ll y$ if $y \in \cc(A)$ implies $x \in [A]$ for all subsets $A$.  
And $(P, [\cdot], \cc)$ is \textit{continuous} if $x \in \cc(\{ y : y \ll x \})$ for all $x \in P$. 
From these definitions not only do we recover classical domain theory with $[A] := \downarrow\!\! A$ and $\cc \colon A \mapsto \bigcup_{D \subset [A]} D^{\uparrow\downarrow}$, where $D$ runs over the directed subsets with supremum included in $[A]$, but we also encompass situations left aside by the $\Z$ framework. %on propose donc un nouveau point de vue base sur ces closure operators! ca va apporter quelque chose parce qu'on inclut a present d'autres generalisations qui ne rentraient pas dans le cadre Z. (AH BON ?)

Moreover, we provide a series of results; some of them generalize assertions rediscovered for different choices of $\mathfrak{M}$; some others are new, in the sense that no counterpart exists in the $\mathfrak{M}$ or $\Z$ frameworks. 
The next result is of the former type: %, the following generalizes various results from references already cited: 

\begin{theorem*}
Let $(P, [\cdot], \cc)$ be an enriched closet. 
Then the following conditions are equivalent:
\begin{itemize}
	\item $(P, [\cdot], \cc)$ is continuous; 
	\item $A \subset \cc(\twoheaddownarrow A)$, for all subsets $A$ of $P$;   
	\item $(\twoheaddownarrow, \cc)$ is a Galois connection on $[\cdot]$-closed subsets; 
	\item $\cc$ preserves arbitrary intersections of $[\cdot]$-closed subsets. 
\end{itemize}
\end{theorem*}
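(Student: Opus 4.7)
The plan is to establish the equivalences via the chain $(1) \Leftrightarrow (2) \Leftrightarrow (3) \Rightarrow (4) \Rightarrow (1)$. The equivalence $(1) \Leftrightarrow (2)$ is essentially immediate: $(2) \Rightarrow (1)$ by specialization to $A = \{x\}$, and $(1) \Rightarrow (2)$ because each $x \in A$ satisfies $x \in \cc(\twoheaddownarrow \{x\}) \subset \cc(\twoheaddownarrow A)$ by the monotonicity of $\twoheaddownarrow$ and $\cc$.

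For $(2) \Leftrightarrow (3)$, I would verify the standard Galois connection axioms. Monotonicity of $\twoheaddownarrow$ and $\cc$ is clear. The counit inequality $\twoheaddownarrow \cc(B) \subset B$, for $[\cdot]$-closed $B$, is automatic from the very definition of $\ll$: if $y \ll z$ and $z \in \cc(B)$, then $y \in [B] = B$. Using the compatibility $\cc(A) = \cc([A])$ to reduce arbitrary $A$ to its closure, the remaining unit inequality $A \subset \cc(\twoheaddownarrow A)$ is exactly $(2)$. The implication $(3) \Rightarrow (4)$ is then the standard fact that the right adjoint of a Galois connection preserves arbitrary meets; meets in the lattice of $[\cdot]$-closed subsets are just intersections, since intersections of $[\cdot]$-closed sets remain $[\cdot]$-closed.

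The crux is $(4) \Rightarrow (1)$. The key observation is the identity
\[
\twoheaddownarrow \{z\} \;=\; \bigcap \bigl\{ B : B \text{ is } [\cdot]\text{-closed and } z \in \cc(B) \bigr\},
\]
valid for every $z \in P$. To establish it, unfold the definition of $\ll$ and use $\cc(A) = \cc([A])$ to restrict the universal quantifier in the definition to $[\cdot]$-closed test sets. Applying hypothesis $(4)$ to the right-hand side then yields
\[
\cc(\twoheaddownarrow \{z\}) \;=\; \bigcap \bigl\{ \cc(B) : B \text{ is } [\cdot]\text{-closed and } z \in \cc(B) \bigr\},
\]
and each $\cc(B)$ in this family contains $z$ by construction, so $z \in \cc(\twoheaddownarrow \{z\})$, which is $(1)$.

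The Galois-theoretic steps $(2) \Leftrightarrow (3) \Rightarrow (4)$ are essentially routine bookkeeping once the counit observation is in hand. The main obstacle is recognizing the intersection representation of $\twoheaddownarrow \{z\}$ that drives $(4) \Rightarrow (1)$; once this description is identified, the preservation of intersections by $\cc$ converts directly into the pointwise continuity condition.
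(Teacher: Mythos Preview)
Your proof is correct and follows essentially the same route as the paper's: the paper proves the cycle $(1)\Rightarrow(2)\Rightarrow(3)\Rightarrow(4)\Rightarrow(1)$, with the identical key identity $\twoheaddownarrow x=\bigcap_{x\in\cc(A)}[A]$ driving $(4)\Rightarrow(1)$. The only cosmetic difference is that for $(2)\Leftrightarrow(3)$ you verify the unit/counit form of the adjunction while the paper checks the defining bi-implication $\twoheaddownarrow A\subset B\Leftrightarrow A\subset\cc(B)$ directly; note that your reduction from arbitrary $A$ to $[A]$ really uses $\twoheaddownarrow[A]=\twoheaddownarrow A$ (via $x\ll y\leqslant z\Rightarrow x\ll z$) rather than just $\cc(A)=\cc([A])$, but this is a minor point of phrasing.
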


%Here by a weakly-closed subset we mean a subset $A$ such that $[A] = A$. 

We also provide necessary and/or sufficient conditions for the way-below relation $\ll$ to have \textit{interpolation property} in the sense that, whenever $x \ll z$, there exists some $y$ such that $x \ll y \ll z$; equivalently, $\ll$ is idempotent for the composition of binary relations. 
Recall that, in classical domain theory, the interpolation property is satisfied as soon as the poset at stake is continuous. This implication is no longer automatic for a $\Z$-continuous poset; however, if $\Z$ is \textit{union-complete} in the sense that $\mathrsfs{V} \in \Z[\Z[P]]$ implies $\bigcup \mathrsfs{V} \in \Z[P]$, then continuity implies the interpolation property \cite{Bandelt84}. 
By transposing this definition of an adequate notion of \textit{union-complete family of subsets}, we obtain the following result. 

%, nor does it hold in general in the case of a continuous enriched closet. 
%AUTHORS XX have put evidence on the role played by union-completeness in that regard: a subset system $\Z$ is \textit{union-complete} if $\mathrsfs{V} \in \Z[\Z[P]]$ implies $\bigcup \mathrsfs{V} \in \Z[P]$, a definition that we transpose to obtain an adequate notion of \textit{union-complete family of subsets}. 
%It has been proved that, in a $\Z$-continuous poset, $\ll$ has the interpolation property whenever $\Z$ is union-complete. 
%In our language, this translates into the following result, where by a relatively-closed subset we mean a subset $A$ such that $\cc(\cc(A)) = \cc(A)$.  

\begin{theorem*}
Let $(P, [\cdot], \cc)$ be a continuous enriched closet. 
Then $\ll$ has the interpolation property if and only if $\cc$ is generated by a union-complete family of relatively-closed $[\cdot]$-closed subsets. 
\end{theorem*}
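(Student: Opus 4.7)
The plan is to follow the template of Bandelt and Ern\'e's \cite{Bandelt84} proof for $\Z$-continuous posets, having elements of the generating family $\mathcal{F}$ play the role that directed subsets play in the $\Z$-theoretic setting. As a preliminary I would record that for each $z \in P$ the set $\{y : y \ll z\}$ is $[\cdot]$-closed: if $w \in [\{y : y \ll z\}]$ and $z \in \cc(A)$, then $\{y : y \ll z\} \subset [A]$ forces $w \in [[A]] = [A]$, so $w \ll z$. This lets me freely drop a $[\cdot]$-closure whenever I know a $[\cdot]$-closed set lies inside $[\{y : y \ll z\}]$.

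For sufficiency, suppose $\cc$ is generated by a union-complete family $\mathcal{F}$ of relatively-closed $[\cdot]$-closed subsets, with generator formula $\cc(A) = \bigcup\{F \in \mathcal{F} : F \subset [A]\}$. Fix $x \ll z$. By continuity $z \in \cc(\{y : y \ll z\})$, so the generator property produces $F_z \in \mathcal{F}$ with $z \in F_z \subset \{y : y \ll z\}$. For each $y \in F_z$ iterate: obtain $F_y \in \mathcal{F}$ with $y \in F_y \subset \{v : v \ll y\}$. Union-completeness applied to the family $\{F_y\}_{y \in F_z}$---the single appearance of that hypothesis---yields $D \in \mathcal{F}$ containing $\bigcup_{y \in F_z} F_y$, and therefore containing $F_z$, so $z \in \cc(F_z) \subset \cc(D)$. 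Then $x \ll z$ forces $x \in [D] = D$, so $x \ll y$ for some $y \in F_z \subset \{w : w \ll z\}$, producing the desired interpolant.

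For necessity, assume $\ll$ interpolates. I would take as candidate generating family the collection $\mathcal{F} := \{\cc(\{y : y \ll z\}) : z \in P\}$ (or the closely related $\{\{y : y \ll z\} : z \in P\}$, depending on which matches the paper's notion of relatively-closed). Then I would check three points: (i) each element of $\mathcal{F}$ is $[\cdot]$-closed (immediate from the compatibility $[\cc(\cdot)] = \cc(\cdot)$ together with the preliminary) and relatively-closed; (ii) $\cc$ is generated by $\mathcal{F}$---continuity gives the inclusion $\cc(A) \subset \bigcup_{z \in \cc(A)} \cc(\{y : y \ll z\})$, while the reverse comes from the definition of $\ll$; (iii) $\mathcal{F}$ is union-complete, which is the decisive use of interpolation. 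For (iii), the indexed subfamily $\{\cc(\{v : v \ll x\}) : x \in \cc(\{y : y \ll z\})\}$ should collapse under union to $\cc(\{v : v \ll z\}) \in \mathcal{F}$: the containment in one direction uses $v \ll x \ll z \Rightarrow v \ll z$, while the reverse containment is exactly the interpolation hypothesis, which factors each $v \ll z$ as $v \ll x \ll z$ for some witness $x$.

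The chief delicate point is matching the above candidate $\mathcal{F}$ and generator formula to the paper's precise formulations of ``generated by,'' ``union-complete,'' and ``relatively-closed $[\cdot]$-closed,'' so that the witnesses built in the necessity direction satisfy them with matching constants. A secondary subtlety in the sufficiency argument is that one needs the $D$ produced by union-completeness actually to contain $F_z$ (not merely the $[\cdot]$-closure of $\bigcup_{y \in F_z} F_y$ up to a weaker relation), so that the crucial chain $z \in F_z \subset D$ holds and delivers $z \in \cc(D)$; this should be a direct consequence of the precise formulation but must be verified rather than assumed.
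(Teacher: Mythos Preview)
Your sufficiency argument has a genuine gap stemming from a misreading of the paper's definition of ``generated by.'' The paper defines $\cc$ to be generated by $\mathcal{F}$ when
\[
\cc(A) = \bigcup_{D \in \mathcal{F},\, D \subset [A]} \cc(D),
\]
with $\cc(D)$ in the union, not $D$ itself. Hence from $z \in \cc(\twoheaddownarrow z)$ you only obtain $F_z \in \mathcal{F}$ with $F_z \subset \twoheaddownarrow z$ and $z \in \cc(F_z)$, \emph{not} $z \in F_z$; likewise $y \in \cc(F_y)$, not $y \in F_y$. Consequently your claimed inclusion $F_z \subset D$ fails: with the correct formula, $D = \bigcup_{y \in F_z} F_y \subset \twoheaddownarrow \twoheaddownarrow z$, and this need not contain $\twoheaddownarrow z$ --- that inclusion is precisely interpolation, which is what you are trying to prove. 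Your flagged ``secondary subtlety'' is therefore not a routine verification but the crux of the matter, and it cannot be resolved along the lines you sketch. A warning sign is that your sufficiency argument never invokes the relative-closedness hypothesis.

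The paper uses that hypothesis exactly here. First it identifies $F_y = \twoheaddownarrow y$: from $y \in \cc(F_y)$ one gets $\twoheaddownarrow y \subset [F_y]$, while $F_y \subset \twoheaddownarrow y$ gives $[F_y] \subset \twoheaddownarrow y$, and weak-closedness of $F_y$ yields $F_y = [F_y] = \twoheaddownarrow y$. This identification also makes the monotonicity condition in the definition of union-completeness automatic, which your arbitrary choice of $F_y$ does not guarantee. Union-completeness then gives $D = \twoheaddownarrow\twoheaddownarrow z \in \mathcal{F}$. Continuity gives $y \in \cc(\twoheaddownarrow y) \subset \cc(D)$ for each $y \ll z$, hence $\twoheaddownarrow z \subset \cc(D)$, hence $z \in \cc(\twoheaddownarrow z) \subset \cc(\cc(D))$. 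Now relative-closedness of $D$ collapses $\cc(\cc(D))$ to $\cc(D)$, yielding $z \in \cc(D)$; from here your outline concludes correctly. Your necessity direction, taking $\mathcal{F} = \{\twoheaddownarrow z : z \in P\}$, matches the paper's.
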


Here, a preclosure operator $\cc$, is \textit{generated by a family $\mathrsfs{D}$} of subsets if for all subsets $A$ we have 
\[
\cc(A) = \bigcup_{D \subset [A]} \cc(D), 
\]
where the union is taken over the subsets $D \in \mathrsfs{D}$ included in $[A]$. 
And a subset is \textit{relatively-closed} if $\cc(\cc(A)) = \cc(A)$. 

%To this result we add two new: 
%So here what we call a \textit{union-complete family} is a family $\mathrsfs{D}$ of subsets such that, for every family $(D_x)_{x \in D}$ of elements of $\mathrsfs{D}$ indexed by some subset $D \in \mathrsfs{D}$ and such that $x \in [\{y\}] \Rightarrow D_x \subset D_y$, we have $\bigcup_{x \in D} D_x \in \mathrsfs{D}$.  

To our knowledge, the following result has no  counterpart in the $\mathfrak{M}$ or $\Z$ frameworks. We write $\twoheaduparrow A$ for $\{ x : a \ll x \mbox{ for some $a \in A$ } \}$. 

\begin{theorem*}
Let $(P, [\cdot], \cc)$ be a continuous enriched closet in which every subset $G$ such that $G = \twoheaduparrow G$ is $[\cdot]$-open. 
Then $\ll$ has the interpolation property if and only if subsets of the form $\twoheaduparrow A$, $A \subset P$, are $\cc$-open. 
\end{theorem*}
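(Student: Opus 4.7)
The plan is to prove the two implications separately, after first establishing two preliminary facts: that the way-below relation is transitive, and that under the interpolation hypothesis every set $\twoheaduparrow A$ coincides with $\twoheaduparrow(\twoheaduparrow A)$. For transitivity I will exploit the inclusion $[C] \subset \cc(C)$, which itself follows from $C \subset \cc(C)$ together with the $[\cdot]$-closedness of $\cc(C)$: if $a \ll y$ and $y \ll x$, then for any $C$ with $x \in \cc(C)$ one gets $y \in [C] \subset \cc(C)$, whence $a \in [C]$. From transitivity, one inclusion $\twoheaduparrow(\twoheaduparrow A) \subset \twoheaduparrow A$ is free; the other will follow by interpolating $a \ll b \ll x$ whenever $a \in A$ and $a \ll x$.

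For the forward direction I fix $A$ and put $F = P \setminus \twoheaduparrow A$. The preliminary step makes $\twoheaduparrow A$ a fixed point of $\twoheaduparrow$, so the standing hypothesis of the theorem declares it $[\cdot]$-open, and hence $F$ is $[\cdot]$-closed. Given $x \in \cc(F)$ I will argue by contradiction: if $a \ll x$ for some $a \in A$, interpolation furnishes $b$ with $a \ll b \ll x$, and then the definition of $\ll$ applied to $F$ forces $b \in [F] = F$; meanwhile $a \in A$ with $a \ll b$ places $b$ in $\twoheaduparrow A$, a contradiction. This shows $\cc(F) \subset F$, which is exactly $\cc$-openness of $\twoheaduparrow A$.

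For the converse, given $x \ll z$ I set $U = \twoheaduparrow\{x\}$, so that $U$ is $\cc$-open by assumption and $F = P \setminus U$ is $\cc$-closed. Continuity of $(P, [\cdot], \cc)$ supplies $z \in \cc(D)$ with $D = \{y : y \ll z\}$. If $D$ were contained in $F$, monotonicity of $\cc$ together with $\cc(F) = F$ would force $z \in F$, contradicting $x \ll z$; hence some $y$ lies in $D \cap U$, which reads $x \ll y \ll z$. The main obstacle, as I see it, is the forward direction: the theorem's hypothesis only supplies $[\cdot]$-openness of $\twoheaduparrow$-saturated subsets, and interpolation is precisely what is needed to upgrade $[\cdot]$-closedness of $F$ to $\cc$-closedness by exhibiting the interpolating element that produces the contradiction. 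Every other step is either a direct application of the definition of $\ll$ or a routine monotonicity argument.
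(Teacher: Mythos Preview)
Your proof is correct and follows essentially the same route as the paper. The only cosmetic difference is packaging: the paper factors the forward direction through Proposition~\ref{prop:scott0} (every weakly-open way-upper set is $\cc$-open) and the converse through Proposition~\ref{prop:scott} (in a continuous enriched closet every $\cc$-open set is way-upper), whereas you unfold both of these inline; the underlying arguments---interpolation to exhibit an element of $\twoheaduparrow A$ inside $[F]$, and continuity to force $\twoheaddownarrow z$ to meet the open set $\twoheaduparrow x$---are identical.
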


One key aspect of our framework to be emphasized is that we do not suppose the preclosure operator $\cc$ to be idempotent in general, i.e.\ $\cc$ is not necessarily a closure operator. 
This makes explicit the correspondence between idempotency of $\cc$ and the interpolation property, as follows: 

\begin{theorem*}
Let $(P, [\cdot], \cc)$ be a continuous enriched closet whose family of $[\cdot]$-closed subsets is union-complete. 
Then $\ll$ has the interpolation property if and only if $\cc$ is a closure operator. 
\end{theorem*}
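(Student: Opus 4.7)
The plan is to reduce both directions to already-proved results: the previous theorem characterizing the interpolation property via generation by a union-complete family of relatively-closed $[\cdot]$-closed subsets, and the first theorem stating that continuity is equivalent to $(\twoheaddownarrow, \cc)$ being a Galois connection on $[\cdot]$-closed subsets. One direction is then essentially bookkeeping, while the other is a short Galois-theoretic argument that hinges on interpolation.

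For the direction ``$\cc$ is a closure operator $\Rightarrow$ interpolation,'' I would apply the previous theorem with the family $\mathrsfs{D}$ taken to be the family of all $[\cdot]$-closed subsets. Idempotency of $\cc$ makes every $[\cdot]$-closed subset relatively-closed, and union-completeness of this family is given by hypothesis. It remains to check that $\cc$ is generated by this family: for any $A$, the compatibility $\cc(A) = \cc([A])$ together with monotonicity of $\cc$ yields $\cc(A) = \bigcup_{D \subset [A]} \cc(D)$ where $D$ ranges over $[\cdot]$-closed subsets, because $[A]$ itself is $[\cdot]$-closed and lies in the union while each $D \subset [A]$ satisfies $\cc(D) \subset \cc([A]) = \cc(A)$. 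Invoking the previous theorem then gives interpolation.

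For the converse ``interpolation $\Rightarrow$ $\cc$ is a closure operator,'' the inclusion $\cc(A) \subset \cc(\cc(A))$ holds by extensivity, so I only need $\cc(\cc(A)) \subset \cc(A)$. By compatibility it suffices to prove $\cc(\cc(B)) \subset \cc(B)$ for $B$ any $[\cdot]$-closed subset. Since $(P, [\cdot], \cc)$ is continuous, the first theorem identifies $(\twoheaddownarrow, \cc)$ as a Galois connection on $[\cdot]$-closed subsets, so the desired inclusion is equivalent to $\twoheaddownarrow \cc(\cc(B)) \subset B$. To verify this, take $x \ll y$ with $y \in \cc(\cc(B))$; interpolation yields $z$ with $x \ll z \ll y$; from $z \ll y \in \cc(\cc(B))$ and the definition of $\ll$, one gets $z \in [\cc(B)] = \cc(B)$; then from $x \ll z \in \cc(B)$, one gets $x \in [B] = B$, since $B$ is $[\cdot]$-closed.

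No step in this plan looks seriously obstructive: the $(\Leftarrow)$ direction is a direct application of the previous theorem once the generation identity is verified, and the $(\Rightarrow)$ direction is a one-paragraph Galois argument in which interpolation is the precise ingredient allowing $x$ to ``jump over'' $\cc(B)$ and land in $B$. If there is a subtlety worth flagging, it is that the union-completeness hypothesis appears to be needed only for $(\Leftarrow)$, since $(\Rightarrow)$ uses just continuity and interpolation; this asymmetry suggests it is instructive to emphasize where union-completeness actually intervenes, and to make explicit that continuity alone already supplies the Galois connection powering the harder-looking implication.
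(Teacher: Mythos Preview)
Your proof is correct. For the implication ``interpolation $\Rightarrow$ $\cc$ idempotent,'' your argument via the Galois connection is, once unpacked, exactly the paper's: both boil down to observing that $y \in \cc(\cc(A))$ gives $\twoheaddownarrow y \subset \cc(A)$, hence $\twoheaddownarrow\twoheaddownarrow y \subset [A]$, and interpolation then yields $\twoheaddownarrow y \subset [A]$, so $y \in \cc(\twoheaddownarrow y) \subset \cc(A)$ by continuity. Your Galois packaging is a cosmetic difference, not a substantive one.

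Where you genuinely diverge is in the implication ``$\cc$ idempotent $\Rightarrow$ interpolation.'' You invoke the characterization of strong continuity via generation by a union-complete family of relatively-closed weakly-closed subsets, taking $\mathrsfs{D}$ to be all weakly-closed subsets. The paper instead works more directly: it checks that $\twoheaddownarrow\twoheaddownarrow x$ is weakly-closed (this is precisely where union-completeness enters, writing $\twoheaddownarrow\twoheaddownarrow x = \bigcup_{y \in \twoheaddownarrow x} \twoheaddownarrow y$ as a monotone union of weakly-closed sets) and relatively-closed (since $\cc$ is idempotent), and then applies Lemma~\ref{lem:interp1}. Your route is valid and arguably cleaner to state, but be aware that in the paper's logical layout the generation theorem you rely on is proved \emph{after} the present statement (it lives in Section~\ref{sec:inner-regular}), and its proof itself rests on Lemma~\ref{lem:interp1} and Proposition~\ref{prop:twohead}. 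So your approach is not circular, but it does require reordering the development; the paper's direct appeal to Lemma~\ref{lem:interp1} avoids this dependency and makes the role of union-completeness more transparent. Your closing remark that union-completeness is used only in this direction is correct and matches the paper.
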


The paper is organized as follows. 
In Section~\ref{sec:preliminaries} we present or recall some elementary definitions and results regarding closure and preclosure spaces, i.e.\ sets endowed with a closure or preclosure operator. 
In Section~\ref{sec:enriched-qosets} we introduce  enriched closets. %, that we define as a closure space $(P, [\cdot])$ equipped with a preclosure operator $\cc \colon 2^P \rightarrow 2^P$ satisfying some compatibility conditions. 
We demonstrate that this concept is appropriate for defining a notion of way-below relation in Section~\ref{sec:way-below}, and notions of continuity in Section~\ref{sec:approximation}, which extend the usual definitions that exist in classical continuous lattices and domain theory. We characterize continuity of enriched closets in several ways. 
In Section~\ref{sec:interpolation} we prove manifold necessary and sufficient conditions for the way-below relation to have the interpolation property. %, that is for the enriched closet (resp.\ the continuous enriched closet) at stake to be interpolating (resp.\ strongly continuous). 
%While interpolation is automatic in classical domain theory once the poset at stake is continuous, this is no longer the case in our more general setting. 
As a special interesting case we focus in Section~\ref{sec:inner-regular} on enriched closets whose preclosure operator is generated from below by families of subsets with special properties. 
This leads in Section~\ref{sec:topology} to conditions for the family of open subsets in an enriched closet to be a topology (which coincides with the Scott topology in the classical case). 
The subsequent developments are illustrated with a series of examples.

\section{Preliminaries on preclosure spaces}\label{sec:preliminaries}

A \textit{preclosure space} $(E, \cc)$ is a set $E$ equipped with a map $\cc \colon 2^E \rightarrow 2^E$ such that $A \subset \cc(A) \subset \cc(B)$, for all subsets $A$ and $B$ such that $A \subset B$. 
The map $\cc$ is called a \textit{preclosure operator}. 
A subset $F$ of $E$ is \textit{$\cc$-closed} (or simply \textit{closed} if the context is clear) if $\cc(F) = F$. 
A subset $G$ of $E$ is \textit{$\cc$-open} (or simply \textit{open} if the context is clear) if $E \setminus G$ is closed. 
Note that the empty set $\emptyset$ is not assumed to be closed in general. 

\begin{lemma}\label{lem:clopen}
Let $(E, \cc)$ be a preclosure space. 
A subset $F$ is closed if and only if 
\[
A \subset F \Rightarrow \cc(A) \subset F, 
\]
for all subsets $A$ of $E$. 
A subset $G$ of $E$ is open if and only if 
\[
\cc(A) \cap G \neq \emptyset \Rightarrow A \cap G \neq \emptyset,  
\]
for all subsets $A$ of $E$. 
\end{lemma}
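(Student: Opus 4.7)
The plan is to prove the two biconditionals separately, using only the two defining properties of a preclosure operator: extensivity ($A \subset \cc(A)$) and monotonicity ($A \subset B \Rightarrow \cc(A) \subset \cc(B)$).

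For the first biconditional, I would handle the forward direction by assuming $\cc(F) = F$ and taking any $A \subset F$; monotonicity then gives $\cc(A) \subset \cc(F) = F$. For the reverse direction, I would specialize the hypothesis to the particular choice $A := F$, which yields $\cc(F) \subset F$; combined with extensivity $F \subset \cc(F)$, this gives $\cc(F) = F$.

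For the second biconditional, the strategy is simply to translate the already-proved characterization of closed subsets into a statement about the complement $F := E \setminus G$, using contraposition. A subset $A$ satisfies $A \subset E \setminus G$ if and only if $A \cap G = \emptyset$, and likewise for $\cc(A)$. Thus, the condition from the first part applied to $F = E \setminus G$, namely $A \subset E \setminus G \Rightarrow \cc(A) \subset E \setminus G$, is exactly the contrapositive of $\cc(A) \cap G \neq \emptyset \Rightarrow A \cap G \neq \emptyset$. By the definition of openness, this concludes the proof.

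No step should present any real obstacle; the argument is almost entirely a rewriting exercise, the only substantive observations being the use of monotonicity in one direction of the first equivalence and the translation through complements for the second equivalence.
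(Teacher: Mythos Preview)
Your proposal is correct and follows essentially the same approach as the paper: both arguments use monotonicity for the forward direction and specialize to $A=F$ (plus extensivity) for the reverse direction. The only difference is that the paper explicitly treats only the closed case and leaves the open case to the reader, whereas you spell out the complement/contraposition argument for opens as well.
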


\begin{proof}
We only prove the assertion for closed subsets. 
Assume that the condition of the lemma is satisfied for some subset $F$. Then with $A = F$ we get $\cc(F) \subset F$, so that $F = \cc(F)$, i.e.\ $F$ is closed. 
Conversely, if $F$ is a closed subset and $A \subset F$, then $\cc(A) \subset \cc(F) = F$. 
\end{proof}

\begin{proposition}
The set of closed subsets of a preclosure space is a Moore family, in the sense that it is stable under arbitrary intersections (hence is a complete lattice). 
\end{proposition}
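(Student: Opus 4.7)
The plan is to take an arbitrary family $(F_i)_{i \in I}$ of $\cc$-closed subsets of $E$, set $F := \bigcap_{i \in I} F_i$, and verify that $F$ is $\cc$-closed; then to invoke the standard fact that any Moore family, ordered by inclusion, forms a complete lattice.

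First, I would note that $E$ itself is $\cc$-closed: extensivity gives $E \subset \cc(E)$, and the codomain of $\cc$ is $2^E$, so $\cc(E) \subset E$, whence $\cc(E) = E$. This handles the degenerate case $I = \emptyset$ (where the intersection equals $E$ by convention).

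Next, for $I$ nonempty, the cleanest route is to apply Lemma~\ref{lem:clopen}: pick any $A \subset F$; then $A \subset F_i$ for every $i \in I$, and since each $F_i$ is closed, Lemma~\ref{lem:clopen} gives $\cc(A) \subset F_i$ for each $i$, so $\cc(A) \subset \bigcap_{i \in I} F_i = F$. Applying Lemma~\ref{lem:clopen} once more, this time as a sufficient condition, shows that $F$ is $\cc$-closed. (Equivalently, one can argue directly: $F \subset F_i$ for every $i$ gives $\cc(F) \subset \cc(F_i) = F_i$ by monotonicity, hence $\cc(F) \subset F$; combined with extensivity $F \subset \cc(F)$, this yields $\cc(F) = F$.)

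Finally, I would record the parenthetical claim about the complete lattice structure: the set $\mathcal{F}$ of $\cc$-closed subsets, ordered by inclusion, admits arbitrary meets by what was just shown (these meets coincide with set-theoretic intersections), and has a top element $E$; hence arbitrary joins exist as well, defined by $\bigvee_{i} F_i := \bigcap \{ F \in \mathcal{F} : \bigcup_i F_i \subset F \}$, which is nonempty since it contains $E$.

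There is no real obstacle here — the argument is a direct exploitation of monotonicity and extensivity; the only point requiring a moment of attention is the empty-intersection case, since the empty subset itself need not be $\cc$-closed, but this is harmless for the Moore-family property because the empty intersection is $E$, which is always closed.
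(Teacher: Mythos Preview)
Your proof is correct and essentially matches the paper's approach: the paper argues directly from monotonicity that $\cc(F) \subset \cc(F_j) = F_j$ for each $j$, which is exactly the parenthetical alternative you give. You add welcome extra care that the paper omits --- the empty-intersection case (showing $E$ is closed) and an explicit justification of the complete-lattice claim --- but the core idea is the same.
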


\begin{proof}
Let $(F_j)_{j\in J}$ be a family of closed subsets, and let $F = \bigcap_{j\in J} F_j$. For all $j \in J$, we have $\cc(F) \subset \cc(F_j) = F_j$, so that $\cc(F) \subset \bigcap_{j\in J} F_j = F$, which proves that $F$ is closed. 
\end{proof}

If a preclosure operator $\cc$ on $E$ is \textit{idempotent}, i.e.\ such that $\cc(\cc(A)) = \cc(A)$ for all subsets $A$, then it is called a \textit{closure operator}, and $(E, \cc)$ is a \textit{closure space}. 
In this paper, we shall use the unconventional term of \textit{closet} for a closure space. 

The previous result shows that, if $(E, \cc)$ is a preclosure space, there exists a least closure operator $\overline{\cc}$ that dominates $\cc$ (pointwise), defined for all subsets $A$ by 
\[
\overline{\cc}(A) = \bigcap_{F \supset A} F, 
\]
where the intersection is taken over the closed subsets $F$ containing $A$. 
It follows from the definition that $\cc(A) \subset \overline{\cc}(A)$ and $\cc(\overline{\cc}(A)) = \overline{\cc}(A)$, for all subsets $A$ of $E$. 
Note also that $\cc$ and $\overline{\cc}$ define the same family of closed (resp.\ open) subsets. 
We call $\overline{\cc}$ the \textit{closure operator associated with the preclosure operator $\cc$}. 

%Since $\overline{\cc}(A)$ is the least fixed point of $\cc$ above $A$, one can also obtain $\overline{\cc}(A)$ by composing $\cc$ a transfinite number of times starting from $A$, that is 
%\[
%\overline{\cc}(A) = \bigcup_{\alpha} \cc^{\alpha}(A),
%\]
%where, for every ordinal $\alpha$, $\cc^{\alpha}(A)$ is defined by transfinite induction as: $A$ if $\alpha = 0$; $\cc(\cc^{\alpha-1}(A))$ if $\alpha$ is a successor ordinal; $\bigcup_{\beta < \alpha} \cc^{\beta}(A)$ if $\alpha$ is a limit ordinal. 

A map $f \colon E \rightarrow E'$ between two preclosure spaces $(E,\cc)$ and $(E',\cc')$ is \textit{continuous} if $f(\cc(A)) \subset \cc'(f(A))$ for all subsets $A$ of $E$. 
If $f$ is continuous with respect to $(E, \overline{\cc})$ and $(E', \overline{\cc}')$ we shall say that $f$ is \textit{closure-continuous}, as opposed to a  \textit{strictly-continuous} map defined as a continuous map with respect to $(E, \cc)$ and $(E', \cc')$. 
It is well known that a map is closure-continuous if and only if closed subsets are preserved by the inverse image of that map. 

\begin{proposition}\label{prop:eqstrictscott}
Let $(E, \cc)$ and $(E', \cc')$ be two preclosure spaces, and let $f \colon E \rightarrow E'$. 
If $f$ is strictly-continuous, then $f$ is closure-continuous. 
Conversely, if $\cc'$ is a closure operator (i.e.\ $\cc' = \overline{\cc}'$) and $f$ is closure-continuous, then $f$ is strictly-continuous. 
\end{proposition}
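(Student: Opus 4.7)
The plan is to reduce both implications to the fact recorded just before the proposition: a map is closure-continuous if and only if the preimage of every closed subset is closed, combined with the observation that $\cc$ and $\overline{\cc}$ determine the same family of closed subsets (and similarly for $\cc'$ and $\overline{\cc}'$).

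For the first implication (strictly-continuous $\Rightarrow$ closure-continuous), I would take an arbitrary $\overline{\cc}'$-closed subset $F'$ of $E'$ and show that $A := f^{-1}(F')$ is $\cc$-closed. Since $F'$ is also $\cc'$-closed, the inclusion $f(A) \subset F'$ gives $\cc'(f(A)) \subset \cc'(F') = F'$. Applying strict continuity yields $f(\cc(A)) \subset \cc'(f(A)) \subset F'$, hence $\cc(A) \subset f^{-1}(F') = A$, so $A$ is $\cc$-closed and therefore $\overline{\cc}$-closed. By the characterization of closure-continuity via preimages, $f$ is closure-continuous.

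For the converse, assume $\cc' = \overline{\cc}'$ and that $f$ is closure-continuous; fix $A \subset E$ and put $F' := \cc'(f(A))$. Because $\cc'$ is idempotent, $F'$ is $\cc'$-closed, hence $\overline{\cc}'$-closed. Closure-continuity gives that $f^{-1}(F')$ is $\overline{\cc}$-closed, hence $\cc$-closed. Since $f(A) \subset F'$, we have $A \subset f^{-1}(F')$, and applying $\cc$ yields $\cc(A) \subset \cc(f^{-1}(F')) = f^{-1}(F')$. Taking images, $f(\cc(A)) \subset F' = \cc'(f(A))$, which is strict continuity.

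The only subtle point, and the one worth emphasizing in the write-up, is the switch between $\cc$-closedness and $\overline{\cc}$-closedness (and likewise on the target): this is legitimate precisely because the two preclosure operators share the same Moore family of closed subsets, as noted in the paragraph introducing $\overline{\cc}$. The hypothesis $\cc' = \overline{\cc}'$ in the converse is essential because it is what lets us conclude that $\cc'(f(A))$ is a closed subset of $E'$, to which the preimage characterization can be applied; without idempotency of $\cc'$, one would only know that $\overline{\cc}'(f(A))$ is closed, and the resulting inclusion $f(\cc(A)) \subset \overline{\cc}'(f(A))$ is strictly weaker than what strict continuity requires.
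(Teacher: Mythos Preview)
Your proof is correct. The first implication is argued exactly as in the paper: take a closed $F' \subset E'$, show $f^{-1}(F')$ is $\cc$-closed via strict continuity, and invoke the preimage characterization of closure-continuity. For the converse you again go through the preimage characterization, applied to the specific closed set $F' = \cc'(f(A))$; the paper instead uses the definition of closure-continuity directly, writing the one-line chain $f(\cc(A)) \subset f(\overline{\cc}(A)) \subset \overline{\cc}'(f(A)) = \cc'(f(A))$. Both routes hinge on the same point you identified---that idempotency of $\cc'$ is what makes $\cc'(f(A))$ (equivalently $\overline{\cc}'(f(A))$) usable---so the difference is purely tactical: the paper's argument is a touch shorter, while yours has the virtue of reusing a single tool (preimages of closed sets) for both halves.
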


\begin{proof}
Let $F'$ be a closed subset of $E'$, and let us show that $F := f^{-1}(F')$ is closed in $E$. 
So let $A \subset F$. 
Then $f(A) \subset F'$, hence $f(\cc(A)) \subset \cc'(f(A)) \subset \cc'(F') = F'$. 
This gives $\cc(A) \subset F$. 
So by Lemma~\ref{lem:clopen} $F$ is closed. 
The inverse image of every closed subset is closed, so $f$ is closure-continuous. 

Conversely, assume that $f$ is closure-continuous and $\cc'$ is a closure operator, and let $A$ be a subset of $E$. 
By closure-continuity $f(\cc(A)) \subset f(\overline{\cc}(A)) \subset \overline{\cc}'(f(A))$. 
Since $\cc'$ is a closure operator, $\overline{\cc}'(f(A)) = \cc'(f(A))$. So we have $f(\cc(A)) \subset \cc'(f(A))$ for all subsets $A$, which proves that $f$ is strictly-continuous. 
\end{proof}

%We shall refine Proposition~\ref{prop:eqstrictscott} in Section~\ref{sec:}, see Proposition~\ref{prop:}. 

For more on closure spaces, see e.g.\ Ern\'e \cite{Erne09}.

\section{The notion of enriched closet}\label{sec:enriched-qosets}

\begin{definition}\label{def:ec}
An \textit{enriched closet} is a closet $(P, [\cdot])$ equipped with a \textit{compatible} preclosure operator $\cc$, in the sense that 
%\begin{align*}
%[A] &\subset \cc(A), \\
%[\cc(A)] &\subset \cc([A]), 
%\end{align*}
\[
[\cc(A)] = \cc(A) = \cc([A]), 
\] 
for all subsets $A$ of $P$.  
\end{definition}

Open (resp.\ closed) subsets will always refer to $\cc$-open (resp.\ $\cc$-closed) subsets, while a $[\cdot]$-open (resp.\ $[\cdot]$-closed) subset will be called \textit{weakly-open} (resp.\ \textit{weakly-closed}). %(We do not coin any term for $[\cdot]$-open subsets, as we shall hardly need it.) 
Note that: 
\begin{itemize}
	\item every open subset is weakly-open; 
	\item every closed subset is weakly-closed; 
	\item every subset of the form $\cc(A)$ is weakly-closed. %every subset of the form $\cc([A])$ is weakly-closed, since $\cc([A]) \subset [\cc([A])] \subset \cc([[A]]) = \cc([A])$. 
\end{itemize}

A \textit{quasiordered set} or \textit{qoset} $(P,\leqslant)$ is a set $P$ together with a reflexive and transitive binary relation $\leqslant$. 
If in addition $\leqslant$ is antisymmetric, then $(P, \leqslant)$ is a \textit{partially ordered set} or \textit{poset}. 
The sets $\uparrow\!\! A := \{ x \in P : \exists a\in A, a \leqslant x \}$ and $\downarrow\!\! A := \{ x \in P : \exists a \in A, x \leqslant a \}$ are the \textit{upper set} and the \textit{lower set} generated by $A$, respectively. 
If $x\in P$, we write $\uparrow\!\! x$ for the principal filter $\uparrow\!\! \{ x \}$, and $\downarrow\!\! x$ for the principal ideal $\downarrow\!\! \{ x \}$. 
The subset $A^{\downarrow} = \{ x \in P : \forall a \in A, x \leqslant a \}$ is the set of lower bounds of $A$. The set $A^{\uparrow}$ of upper bounds of $A$ is defined dually. 
A nonempty subset $D$ of $P$ is \textit{directed} if for all $d, d' \in D$, there exists some $d'' \in D$ such that $d \leqslant d''$ and $d' \leqslant d''$.

\begin{definition}\label{def:ep}
An \textit{enriched qoset} (resp.\ \textit{enriched poset}) is a qoset (resp.\ poset) $(P, \leqslant)$ equipped with a preclosure operator $\cc$ making $(P, [\cdot], \cc)$ into an enriched closet, where $[A] := \downarrow\!\! A$. 
In this situation $\cc$ is said to be \textit{compatible} with the quasiorder $\leqslant$. 
\end{definition}

\begin{example}\label{ex:alex}
Let $(P,\leqslant)$ be a qoset. 
Several (pre)closure operators that are compatible with $\leqslant$ coexist on $P$. 
Especially, the \textit{Alexandrov closure operator} is $A \mapsto \downarrow\!\! A$, 
and the \textit{Dedekind--MacNeille closure operator} is $A \mapsto A^{\uparrow \downarrow} := (A^{\uparrow})^{\downarrow}$. If $P$ is a poset and $A$ admits a least upper bound $\bigvee A$, then $A^{\uparrow \downarrow} = \downarrow\!\! \bigvee A$. 
\end{example}

\begin{example}\label{ex:p2}
Let $(P, \leqslant)$ be a poset, and let $\mathrsfs{D}$ be the family of directed subsets with supremum. Consider the map $\cc_{\mathrsfs{D}}$ defined for all subsets $A$ of $P$ by 
\[
\cc_{\mathrsfs{D}}(A) = \bigcup_{D \subset \downarrow\!\! A} D^{\uparrow\downarrow}, 
\]
where the union is taken over the elements $D$ of $\mathrsfs{D}$ that are included in $\downarrow\!\! A$. 
Then $\cc_{\mathrsfs{D}}$ is a preclosure operator that is compatible with $\leqslant$. 
Moreover, a subset $F$ is closed with respect to $\cc_{\mathrsfs{D}}$ if and only if it is closed in the Scott topology (see e.g.\ \cite  [Definition~II-1.3]{Gierz03}). This means that the closure operator $\overline{\cc}_{\mathrsfs{D}}$ associated with $\cc_{\mathrsfs{D}}$ coincides with the closure operator associated with the Scott topology. 
We shall examine in Section~\ref{sec:topology} necessary and sufficient conditions for a (pre)closure operator to be \textit{topological}, as is the case in this example. 
We shall see in Section~\ref{sec:approximation} how $\cc_{\mathrsfs{D}}$ enables one to recover the usual notion of continuous poset from the abstract notion of continuity developed below. 
\end{example}

\begin{example}\label{ex:c2}
Let $(P, \leqslant)$ be a qoset, and $c \colon P \rightarrow P$ be an order-preserving and inflationary map, i.e.\ satisfying $x \leqslant c(x) \leqslant c(y)$ for all $x, y \in P$ with $x \leqslant y$. Then the map $\cc_c \colon A \mapsto \downarrow\!\! \{ c(x) : x \in \downarrow\!\! A \}$ is a preclosure operator that is compatible with $\leqslant$. 
Moreover, $\cc_c$ is a closure operator if and only if $c \circ c = c$. 
\end{example}

\begin{example}\protect{[Novak \cite{Novak82a}]}\label{ex:novak2}
As a generalization of the previous example, let $P$ and $Q$ be qosets, and let $(\ell, r)$ be a pair of order-preserving maps with $\ell \colon Q \rightarrow P$ and $r \colon P \rightarrow Q$ such that $x \leqslant \ell(r(x))$ for all $x \in P$. 
Then the map $\cc_{\ell, r} \colon A \mapsto \downarrow\!\! \{ \ell(y) : y \in \downarrow\!\! r(A) \}$ is a preclosure operator that is compatible with $\leqslant$. 
Moreover, $\cc_{\ell, r}$ is a closure operator if and only if $\ell \circ r \circ \ell = \ell$. 
\end{example}

\begin{example}\label{ex:phi2}
Let $(P, \leqslant)$ be a qoset and $\Phi$ be a family of order-preserving self-maps on $P$. We suppose that $\Phi$ contains at least one deflationary map (i.e.\ a map $\varphi$ such that $\varphi(x) \leqslant x$ for all $x$). 
Then the map $A \mapsto \cc_{\Phi}(A) = \{ x \in P : \varphi(x) \in \downarrow\!\! A \mbox{ for some } \varphi \in \Phi \}$ is a preclosure operator that is compatible with the quasiorder $\leqslant$. 
If moreover $\Phi$ is stable by composition, then $\cc_{\Phi}$ is a closure operator. 
\end{example}

\section{The way-below relation}\label{sec:way-below}

We introduce the way-below relation in the setting of enriched closets. 
In the next section this will enable us to define a notion of continuity that encompasses the usual notion of poset continuity. 
In what follows, $(P,[\cdot], \cc)$ denotes an enriched closet.  

If $x, y \in P$, then $x$ is \textit{way-below} $y$, in symbols $x \ll y$, if 
\[
y \in \cc(A) \Rightarrow x \in [A], 
\] 
for all subsets $A$ of $P$. 
%An element $x$ of $P$ is \textit{compact} if $x \ll x$. 
We write $\twoheaduparrow A := \{ x \in P : a \ll x \mbox{ for some } a \in A \}$ and we define $\twoheaddownarrow A$ dually. We write $\twoheaddownarrow x$ for $\twoheaddownarrow \{x\}$ and $\twoheaduparrow x$ for $\twoheaduparrow \{x\}$. 

\begin{proposition}
Let $(P, [\cdot], \cc)$ be an enriched closet. Then 
\begin{itemize}
	\item the way-below relation is transitive;
	\item $x \ll y$ implies $x \leqslant y$;
	\item $x \leqslant y \ll z$ implies $x \ll z$; 
	\item $x \ll y \leqslant z$ implies $x \ll z$; 
	\item $\twoheaddownarrow x \subset \downarrow\!\! x \subset \cc(\{x\})$, 
\end{itemize}
where $\leqslant$ denotes the quasiorder on $P$ defined by $x \leqslant y$ if $x \in [\{y\}]$. 
\end{proposition}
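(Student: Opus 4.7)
The plan is to prove the five bullets in a convenient order, starting with the second one, since the third, fourth, and first follow easily once we have a link between $\ll$ and the induced quasiorder. The final bullet then reduces to the second plus the compatibility condition on $\cc$.

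First I would verify the implication $x \ll y \Rightarrow x \leqslant y$. Since $\cc$ is a preclosure, $y \in \{y\} \subset \cc(\{y\})$, so taking $A = \{y\}$ in the defining property of $\ll$ forces $x \in [\{y\}]$, i.e.\ $x \leqslant y$. This bullet is really just a sanity check that $\ll$ refines $\leqslant$.

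Next I would establish the two one-sided monotonicity statements. For $x \leqslant y \ll z \Rightarrow x \ll z$: assume $z \in \cc(A)$; then $y \in [A]$ by $y \ll z$, hence $\{y\} \subset [A]$, and applying $[\cdot]$ gives $[\{y\}] \subset [[A]] = [A]$ by idempotency of the closure operator; since $x \in [\{y\}]$, we conclude $x \in [A]$. For $x \ll y \leqslant z \Rightarrow x \ll z$: assume $z \in \cc(A)$. Then $\{z\} \subset \cc(A)$, and $y \in [\{z\}] \subset [\cc(A)] = \cc(A)$ by the compatibility condition $[\cc(A)] = \cc(A)$. Applying $x \ll y$ yields $x \in [A]$, as required. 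Transitivity of $\ll$ is then immediate: if $x \ll y \ll z$, bullet two gives $y \leqslant z$, and the fourth bullet combined with $x \ll y$ yields $x \ll z$; alternatively one can argue directly that $y \in [A]$ forces $y \in \cc([A]) = \cc(A)$ via the other half of compatibility, then use $x \ll y$.

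For the final bullet, recall that $\downarrow\!\! x = [\{x\}]$ by the very definition of $\leqslant$. The inclusion $\twoheaddownarrow x \subset \downarrow\!\! x$ is just the second bullet applied pointwise. The inclusion $\downarrow\!\! x \subset \cc(\{x\})$ comes from $\{x\} \subset \cc(\{x\})$, which by monotonicity of $[\cdot]$ and the compatibility condition gives $[\{x\}] \subset [\cc(\{x\})] = \cc(\{x\})$.

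I do not expect any real obstacle here; the only subtle point is being careful to invoke the correct half of the compatibility identity $[\cc(A)] = \cc(A) = \cc([A])$ in each direction (the first equality drives the fourth bullet, the second equality is what makes the alternative direct transitivity argument work), and to use idempotency of $[\cdot]$ in the third bullet.
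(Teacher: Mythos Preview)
Your proof is correct and follows essentially the same approach as the paper's; the only cosmetic difference is that you prove the implication $x \ll y \Rightarrow x \leqslant y$ first and then derive transitivity from the fourth bullet, whereas the paper proves transitivity directly (via $y \in [A] \subset \cc(A)$, which is exactly your ``alternative'' argument). The remaining bullets are argued identically in both.
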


\begin{proof}
Suppose that $x \ll y$ and $y \ll z$, and let $A$ be a subset such that $z \in \cc(A)$. 
Then $y \in [A] \subset \cc(A)$, hence $x \in [A]$. 
This proves that $x \ll z$. So $\ll$ is transitive. 

Suppose that $x \ll y$. From $y \in \cc(\{y\})$ we get $x \in [\{y\}]$, i.e.\ $x \leqslant y$. 

Suppose that $x \leqslant y \ll z$, and let $A$ be a subset such that $z \in \cc(A)$. Then $y \in [A]$, so $x \in [\{y\}] \subset [[A]] = [A]$. This proves that $x \ll z$. 

Suppose that $x \ll y \leqslant z$, and let $A$ be a subset such that $z \in \cc(A)$. 
Then $y \in [\{z\}] \subset [\cc(A)] = \cc(A)$, so $x \in [A]$. This proves that $x \ll z$. 

Once recalled that $\downarrow\!\! x = [\{ x \}]$, inclusions $\twoheaddownarrow x \subset \downarrow\!\! x \subset \cc(\{x\})$ become straightforward. % from the definitions. 
\end{proof}

\begin{proposition}\label{prop:scott0}
Let $(P, [\cdot], \cc)$ be an enriched closet. 
If $G$ is a weakly-open subset such that $G = \twoheaduparrow G$, then $G$ is open. 
\end{proposition}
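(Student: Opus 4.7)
The plan is to use the characterization of open subsets furnished by Lemma~\ref{lem:clopen} and apply it twice, once for $\cc$ and once for $[\cdot]$, with the way-below relation acting as the bridge between the two.

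More concretely, by Lemma~\ref{lem:clopen} it suffices to show that for every subset $A$ of $P$, if $\cc(A) \cap G \neq \emptyset$ then $A \cap G \neq \emptyset$. So I would fix such an $A$ and pick some $y \in \cc(A) \cap G$. Since $G = \twoheaduparrow G$, there exists $a \in G$ with $a \ll y$. The key move is then to apply the very definition of the way-below relation: from $y \in \cc(A)$ and $a \ll y$ we deduce $a \in [A]$. Hence $[A] \cap G \neq \emptyset$.

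To conclude, I would invoke Lemma~\ref{lem:clopen} a second time, now in the underlying closure space $(P, [\cdot])$: since $G$ is weakly-open, $[A] \cap G \neq \emptyset$ forces $A \cap G \neq \emptyset$, which is exactly what was needed.

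I do not expect any real obstacle here; the statement is essentially a direct unraveling of the definitions, with the way-below relation doing precisely the job it was designed for, namely translating $\cc$-membership of $y$ into $[\cdot]$-membership of approximants $a$ of $y$. The only subtlety worth double-checking is the direction of the implication in the second clause of Lemma~\ref{lem:clopen} characterizing open subsets, which must be used verbatim for both preclosure operators $\cc$ and $[\cdot]$.
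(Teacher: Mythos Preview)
Your proof is correct and follows essentially the same approach as the paper's own proof. The only cosmetic difference is that where you invoke Lemma~\ref{lem:clopen} a second time for the closure operator $[\cdot]$ to pass from $[A]\cap G\neq\emptyset$ to $A\cap G\neq\emptyset$, the paper unfolds that step directly via contradiction (if $A\cap G=\emptyset$ then $A\subset P\setminus G$, so $[A]\subset[P\setminus G]=P\setminus G$ since $P\setminus G$ is $[\cdot]$-closed); these are the same argument.
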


\begin{proof}
Assume that $G$ is weakly-open and such that $G = \twoheaduparrow G$, and let $A$ such that $\cc(A) \cap G \neq \emptyset$. 
There is some $x \in \cc(A)$ with $x \in G$. Then $x \in \twoheaduparrow G$, meaning that there is some $y \in G$ such that $y \ll x$. 
This implies that $y \in [A]$. 
If $A \cap G = \emptyset$, then $A \subset P \setminus G$, so $y \in [A] \subset [P \setminus G] = P \setminus G$, a contradiction. 
This shows that $A \cap G \neq \emptyset$. 
By Lemma~\ref{lem:clopen}, this proves that $G$ is open. 
\end{proof}

We call \textit{way-upper} a subset $G$ such that $G = \twoheaduparrow G$. So the previous proposition says that every weakly-open and way-upper subset is open. Proposition~\ref{prop:scott} below will give a converse statement in the case where $P$ is continuous. 

\begin{example}[Example~\ref{ex:p2} continued]\label{ex:p3}
Let $(P, \leqslant)$ be a poset, and let $\mathrsfs{D}$ be the family of directed subsets with supremum. 
Then the way-below relation associated with the compatible preclosure operator $\cc_{\mathrsfs{D}}$ coincides with the usual way-below relation of the poset $P$. In other words, $x \ll y$ if and only if, whenever $y \leqslant \bigvee D$ for some directed subset $D$ with supremum, we have $x \in \downarrow\!\! D$. 
%
%More generally, the way-below relation associated with $\cc_{\mathfrak{M}}$ satisfies $x \ll y$ if and only if, whenever $y \in M^{\uparrow\downarrow}$ for some $M \in \mathfrak{M}$, we have $x \in \downarrow\!\! M$. 
%In particular, if $\mathfrak{M}$ is the collection of singletons of $P$, then the way-below relation reduces to the partial order $\leqslant$.  
\end{example}

\begin{example}[Example~\ref{ex:c2} continued]\label{ex:c3}
Let $(P, \leqslant)$ be a qoset, and $c \colon P \rightarrow P$ be an order-preserving and inflationary map. 
The way-below relation associated with the compatible preclosure operator $\cc_c$ is such that $x \ll y$ if and only if $y \leqslant c(a)$ implies $x \leqslant a$ for all $a \in P$. 
We have seen that $\cc_c$ is a closure operator if and only if $c \circ c = c$; if $P$ is a complete lattice, this latter condition is equivalent to $k \circ k = k$, where $k(x) := \bigvee \twoheaddownarrow x$. 
\end{example}

\begin{example}[Example~\ref{ex:novak2} continued]\label{ex:novak3}
Let $P$ and $Q$ be qosets, and let $(\ell, r)$ be a pair of order-preserving maps with $\ell \colon Q \rightarrow P$ and $r \colon P \rightarrow Q$ having the following properties: 
\begin{itemize}
	\item $x \leqslant \ell(r(x))$ for all $x \in P$; 
	\item $r(x) \leqslant r(x') \Rightarrow x \leqslant x'$ for all $x, x' \in P$; 
	\item either $y \geqslant r(\ell(y))$ for all $y \in Q$, or $r$ is surjective.  
\end{itemize}
The way-below relation associated with the compatible preclosure operator $\cc_{\ell, r}$ is such that $x \ll x'$ if and only if $x' \leqslant \ell(a)$ implies $r(x) \leqslant a$ for all $a \in Q$. 
\end{example}

\begin{example}[Example~\ref{ex:phi2} continued]\label{ex:phi3}
Let $(P, \leqslant)$ be a qoset, and let $\Phi$ be a family of order-preserving self-maps on $P$ containing at least one deflationary map. 
The way-below relation associated with the compatible preclosure operator $\cc_{\Phi}$ is such that $x \ll y$ if and only if $x \leqslant \varphi(y)$ for all $\varphi \in \Phi$. 
Indeed, first assume that $x \ll y$, and let $\varphi \in \Phi$. Defining $A = \downarrow\!\! \varphi(y)$, we have $y \in \cc_{\Phi}(A)$, so that $x \in \downarrow\!\! A$, i.e.\ $x \leqslant \varphi(y)$. 
Conversely, assume that $x \leqslant \varphi(y)$ for all $\varphi \in \Phi$, and let $A \subset P$ such that $y \in \cc_{\Phi}(A)$. Then $\varphi_0(y) \leqslant a_0$ for some $\varphi_0 \in \Phi$ and $a_0 \in A$, so that $x \leqslant a_0$, hence $x \in \downarrow\!\! A$. This shows that $x \ll y$. 
\end{example}

\begin{example}\label{ex:hyp2}
Let $(P,\leqslant)$ be a qoset. 
The preclosure operator considered here is the topological closure with respect to the upper topology (thus, the quasiorder $\leqslant$ and the specialization agree). Then $\ll$ coincides with the \textit{hyper way-below relation} of $P$, and $x \ll y$ if and only if $y$ is in the interior of $\uparrow\!\! x$. 
See e.g.\ Mao and Xu \cite{Mao06}. 
\end{example}

\begin{example}
If $(E,\cc)$ is a closure space, and $\leqslant$ its specialization quasiorder defined by $x \leqslant y$ if $x \in \cc(\{y\})$, then the way-below relation $\ll$ is the so-called \textit{core relation} or \textit{interior relation} $\rho$ on $E$. See Ern\'e and Wilke \cite{Erne83}. 
\end{example}

\section{Approximation and continuity}\label{sec:approximation}

%\subsection{Continuous enriched closets}

Hereunder we develop a notion of approximation for enriched closets similar to the usual notion of approximation on posets. 
We say that the enriched closet $(P, [\cdot], \cc)$ is \textit{continuous at $x \in P$} if $x \in \cc(\twoheaddownarrow x)$, and that $P$ is \textit{continuous} if it is continuous at every $x \in P$. 

Before giving several characterizations of continuity, recall first that a \textit{Galois connection} between two qosets $(P,\leqslant)$ and $(P', \leqslant)$ is a pair $(\phi, \psi)$ of maps $\phi \colon P \rightarrow P'$ and $\psi \colon P' \rightarrow P$ such that $\phi(x) \leqslant x'$ if and only if $x \leqslant \psi(x')$ for all $x \in P$, $x' \in P'$. The map $\phi$ (resp.\ $\psi$) is the \textit{left map} (resp.\ \textit{right map}) of the Galois connection. 

The following theorem is a generalization of Ern\'e and Wilke \cite[Theorem~2.5]{Erne83} and Ern\'e \cite[Theorem~1]{Erne81}. 

\begin{theorem}\label{thm:global}
Let $(P, [\cdot], \cc)$ be an enriched closet. Then the following conditions are equivalent:
\begin{enumerate}
	\item\label{thm:global0} $(P, [\cdot], \cc)$ is continuous; 
	\item\label{thm:global1} $A \subset \cc(\twoheaddownarrow A)$, for all subsets $A$ of $P$;   
	\item\label{thm:global3} $(\twoheaddownarrow, \cc)$ is a Galois connection on weakly-closed subsets; 
	\item\label{thm:global4} $\cc$ preserves arbitrary intersections of weakly-closed subsets. 
\end{enumerate}
\end{theorem}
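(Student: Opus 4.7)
My plan is to run the cycle $(1) \Leftrightarrow (2) \Rightarrow (3) \Rightarrow (4) \Rightarrow (1)$. The equivalence $(1) \Leftrightarrow (2)$ is almost immediate: the implication $(2) \Rightarrow (1)$ is specialization to $A = \{x\}$, using $\twoheaddownarrow\{x\} = \twoheaddownarrow x$; for the converse, $x \in A$ gives $\twoheaddownarrow x \subset \twoheaddownarrow A$, so continuity at $x$ combined with monotonicity of $\cc$ yields $x \in \cc(\twoheaddownarrow x) \subset \cc(\twoheaddownarrow A)$.

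For $(2) \Rightarrow (3)$, I would verify the Galois equivalence
\[
\twoheaddownarrow F \subset F' \iff F \subset \cc(F'), \quad F, F' \text{ weakly-closed.}
\]
The forward direction combines $F \subset \cc(\twoheaddownarrow F)$ from (2) with monotonicity. The reverse direction does not actually require continuity and flows directly from the definition of $\ll$: if $y \in \twoheaddownarrow F$ with $y \ll z$ and $z \in F \subset \cc(F')$, then the weakly-closed character of $F'$ forces $y \in [F'] = F'$. For $(3) \Rightarrow (4)$ I would appeal to the classical fact that a right adjoint preserves arbitrary meets. Concretely, for a family $(F_j)_j$ of weakly-closed subsets, $\cc(\bigcap_j F_j) \subset \bigcap_j \cc(F_j)$ is monotonicity; for the converse, $\bigcap_j \cc(F_j)$ is weakly-closed by the Moore-family property established in the preliminaries, and two applications of the Galois equivalence reduce the desired inclusion first to $\twoheaddownarrow(\bigcap_j \cc(F_j)) \subset F_j$ for each $j$, and then to the trivial $\bigcap_j \cc(F_j) \subset \cc(F_j)$.

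The main obstacle is $(4) \Rightarrow (1)$, where continuity must be extracted from a purely order-theoretic preservation property. For each $x \in P$ I would introduce
\[
F_x := \bigcap\{F : F \text{ is weakly-closed and } x \in \cc(F)\},
\]
noting the intersection is nonempty since $[\{x\}]$ belongs to the collection; as an intersection of weakly-closed sets, $F_x$ is itself weakly-closed. By (4), $\cc(F_x)$ equals the intersection of all $\cc(F)$ as $F$ ranges over this family, and since $x$ lies in each such $\cc(F)$, we obtain $x \in \cc(F_x)$. The crux is then the inclusion $F_x \subset \twoheaddownarrow x$: for $y \in F_x$ and any $A$ with $x \in \cc(A)$, compatibility yields $x \in \cc([A])$, and since $[A]$ is weakly-closed it belongs to the defining family of $F_x$, whence $F_x \subset [A]$ and $y \in [A]$. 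As this holds for every such $A$, the definition of $\ll$ gives $y \ll x$. Monotonicity of $\cc$ then delivers $x \in \cc(F_x) \subset \cc(\twoheaddownarrow x)$, closing the cycle.
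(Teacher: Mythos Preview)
Your proof is correct and follows essentially the same cycle as the paper's argument. The only differences are expository: the paper dispatches $(3)\Rightarrow(4)$ by a one-line appeal to Galois connections while you spell out the two-step reduction, and in $(4)\Rightarrow(1)$ the paper writes the identity $\twoheaddownarrow x = \bigcap_{x\in\cc(A)}[A]$ directly whereas you introduce $F_x$ and verify the inclusion $F_x \subset \twoheaddownarrow x$ --- but these are the same computation.
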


\begin{proof}
\eqref{thm:global0} $\Rightarrow$ \eqref{thm:global1}. Let $A \subset P$. Then $x \in \cc(\twoheaddownarrow x) \subset \cc(\twoheaddownarrow A)$ for all $x \in A$, so that $A \subset \cc(\twoheaddownarrow A)$. 

\eqref{thm:global1} $\Rightarrow$ \eqref{thm:global3}. We want to show the equivalence $\twoheaddownarrow A \subset B \Leftrightarrow A \subset \cc(B)$ for weakly-closed subsets $A$ and $B$. 
First assume that $\twoheaddownarrow A \subset B$, and let $x \in A$. Then $\twoheaddownarrow x \subset \twoheaddownarrow A \subset B$, hence $x \in \cc(\twoheaddownarrow x) \subset \cc(B)$. 
This shows that $A \subset \cc(B)$. 
Conversely, assume that $A \subset \cc(B)$, and let $x \in \twoheaddownarrow A$.  
There is some $y \in A$ such that $x \ll y$. Then $y \in \cc(B)$, hence $x \in [B] = B$. This proves that $\twoheaddownarrow A \subset B$. 

\eqref{thm:global3} $\Rightarrow$ \eqref{thm:global4}. It comes from general properties of Galois connections. 

\eqref{thm:global4} $\Rightarrow$ \eqref{thm:global0}. If $x\in P$, then 
\[
\cc(\twoheaddownarrow x) = \cc(\bigcap_{x \in \cc(A)} [A]) = \bigcap_{x \in \cc(A)} \cc([A]) \ni x, 
\]
so that $(P, [\cdot], \cc)$ is continuous at $x$. 
\end{proof}

%\begin{remark}
%Suppose that $\cc$ satisfies the condition 
%\[
%[\cc(\cdot)] = \cc(\cdot) = \cc([\cdot]), 
%\] 
%which is stronger than the compatibility condition of Definition~\ref{def:ec}. 
%The map $A \mapsto \twoheaddownarrow A$ is the left map of a Galois connection on weakly-closed subsets. Indeed, if we let $\mathfrak{r}$ be the map defined for all subsets $A$ by 
%\[
%\mathfrak{r}(A) = \{ x \in P : \twoheaddownarrow x \subset [A] \}, 
%\]
%then $\mathfrak{r}$ is a compatible \todo{ PAS SUR !! } preclosure operator on $(P, [\cdot])$, and $(\twoheaddownarrow, \mathfrak{r})$ is a Galois connection on weakly-closed subsets. 
%In particular, $(P, [\cdot], \mathfrak{r})$ is always a continuous enriched closet. 
%It is also worth noting that the way-below relations on $(P, [\cdot], \cc)$ and $(P, [\cdot], \mathfrak{r})$ agree. 
%We deduce from the above that $(P, [\cdot], \cc)$ is continuous if and only if $\cc = \mathfrak{r}$. 
%\end{remark}

\begin{example}[Example~\ref{ex:p3} continued]\label{ex:p4}
Let $(P, \leqslant)$ be a poset, and let $\mathrsfs{D}$ be the family of directed subsets with supremum. 
Then $(P, \leqslant, \cc_{\mathrsfs{D}})$ is continuous if and only if every $x \in P$ is the directed supremum of elements way-below it. 
Thus we recover the usual notion of continuous poset. 
\end{example}

\begin{example}[Example~\ref{ex:c3} continued]\label{ex:c4}
Let $(P, \leqslant)$ be a qoset, and $c \colon P \rightarrow P$ be an order-preserving and inflationary map. 
Then $(P, \leqslant, \cc_c)$ is continuous if and only if, for all $x \in P$, there is some $y \in P$ such that $y \ll x \leqslant c(y)$. 
In the case where $P$ is a complete lattice, $(P, \leqslant, \cc_c)$ is continuous if and only if $(k, c)$ is a Galois connection and $k(x) \ll x$ for all $x \in P$, where $k(x) := \bigvee \twoheaddownarrow x$. 
\end{example}

\begin{example}[Example~\ref{ex:novak3} continued]\label{ex:novak4}
Let $P$ and $Q$ be qosets, and let $(\ell, r)$ be a pair of order-preserving maps with $\ell \colon Q \rightarrow P$ and $r \colon P \rightarrow Q$ having the following properties: 
\begin{itemize}
	\item $x \leqslant \ell(r(x))$ for all $x \in P$; 
	\item $r(x) \leqslant r(x') \Rightarrow x \leqslant x'$ for all $x, x' \in P$; 
	\item either $y \geqslant r(\ell(y))$ for all $y \in Q$, or $r$ is surjective.  
\end{itemize}
Then $(P, \leqslant, \cc_{\ell, r})$ is continuous if and only if, for all $x \in P$, there is some $x' \in P$ such that $x' \ll x \leqslant \ell(r(x'))$. 
In the case where $P$ is a complete lattice, $(P, \leqslant, \cc_{\ell, r})$ is continuous if and only if $(k, \ell \circ r)$ is a Galois connection and $k(x) \ll x$ for all $x \in P$, where $k(x) := \bigvee \twoheaddownarrow x$. 
\end{example}

\begin{example}[Example~\ref{ex:phi3} continued]\label{ex:phi4}
Let $(P, \leqslant)$ be a qoset, and let $\Phi$ be a family of order-preserving self-maps on $P$ containing at least one deflationary map. 
Then $(P, \leqslant, \cc_{\Phi})$ is continuous if and only if $\{ \varphi(x) : \varphi \in \Phi \}$ has a least element for all $x \in P$. 
\end{example}

We have seen with Proposition~\ref{prop:scott0} that if $G$ is a weakly-open and way-upper subset, then $G$ is open. 
Here is a converse statement in the case where $P$ is continuous, which generalizes \cite[Proposition~II-1.10(i)]{Gierz03}. 

\begin{proposition}\label{prop:scott}
Let $(P, \leqslant, \cc)$ be a continuous enriched closet. 
Then a subset $G$ is open if and only if $G$ is weakly-open and way-upper. 
\end{proposition}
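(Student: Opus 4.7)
The $(\Leftarrow)$ direction is exactly Proposition~\ref{prop:scott0}, so the plan reduces to establishing the reverse implication: if $G$ is open, then $G$ is weakly-open and way-upper. The first conjunct is free, since the bullet list after Definition~\ref{def:ec} records that every open subset is weakly-open. So the real task is to prove $G = \twoheaduparrow G$, which I split into the two inclusions.

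For $\twoheaduparrow G \subset G$, I would first observe that $G$ is upward-closed for $\leqslant$. Indeed, $P \setminus G$ is closed, hence weakly-closed by compatibility, so $[P \setminus G] = P \setminus G$; for any $y \in P \setminus G$ and $x \leqslant y$, the definition of $\leqslant$ gives $x \in [\{y\}] \subset [P \setminus G] = P \setminus G$, so $G$ is upper. Since $a \ll x$ entails $a \leqslant x$, any $x \in \twoheaduparrow G$ lies above an element of $G$ and therefore lies in $G$.

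For $G \subset \twoheaduparrow G$, let $x \in G$. The continuity assumption gives $x \in \cc(\twoheaddownarrow x)$, so $\cc(\twoheaddownarrow x) \cap G$ is non-empty. Now I invoke the open-set criterion of Lemma~\ref{lem:clopen}: since $G$ is open, $\cc(\twoheaddownarrow x) \cap G \neq \emptyset$ forces $\twoheaddownarrow x \cap G \neq \emptyset$. Picking any $a$ in this intersection yields $a \in G$ with $a \ll x$, so $x \in \twoheaduparrow G$.

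There is no serious obstacle here: this is essentially the same mechanism as the classical proof of \cite[Proposition~II-1.10(i)]{Gierz03}, translated so that directed suprema are replaced by the preclosure $\cc$ and the Scott-openness criterion by the abstract criterion of Lemma~\ref{lem:clopen}. The only point that needs care is to distinguish weakly-closed from closed when deducing that $G$ is upper, which is where the compatibility condition $[\cc(A)] = \cc(A)$ silently enters.
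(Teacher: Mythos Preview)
Your proof is correct and follows essentially the same route as the paper's: one direction is Proposition~\ref{prop:scott0}, weak-openness of $G$ is immediate, the inclusion $\twoheaduparrow G \subset G$ comes from $G$ being an upper set, and the key inclusion $G \subset \twoheaduparrow G$ uses continuity together with openness of $G$. The only cosmetic difference is that for this last step you argue directly via the open-set criterion of Lemma~\ref{lem:clopen} applied to $A = \twoheaddownarrow x$, whereas the paper argues contrapositively by showing $x \notin \twoheaduparrow G \Rightarrow \twoheaddownarrow x \subset P \setminus G \Rightarrow x \in \cc(\twoheaddownarrow x) \subset P \setminus G$.
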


\begin{proof}
If $G$ is weakly-open and $G = \twoheaduparrow G$, then $G$ is open by Proposition~\ref{prop:scott0}. 
Conversely, suppose that $G$ is open. Then $G$ is  weakly-open. 
Moreover, if $x \in P \setminus \twoheaduparrow G$, then for all $y \ll x$, $y \in P\setminus G$, i.e.\ $\twoheaddownarrow x \subset P\setminus G$. 
Since $P$ is continuous and $P\setminus G$ is closed we have $x \in \cc(\twoheaddownarrow x) \subset P\setminus G$. 
This proves that $G \subset \twoheaduparrow G$. 
The reverse inclusion is true since $\twoheaduparrow G \subset \uparrow\!\! G = G$. 
\end{proof}

A subset $A$ is called \textit{connected} if, whenever $A \subset G \cup G'$ for some disjoint open subsets $G, G'$, then $A \subset G$ or $A \subset G'$. 

\begin{corollary}
Let $(P, [\cdot], \cc)$ be a continuous enriched closet. 
Then every subset of the form $\twoheaddownarrow x$, with $x \in P$, is connected. 
\end{corollary}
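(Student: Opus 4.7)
The plan is to reduce connectedness of $\twoheaddownarrow x$ to the fact that, in a continuous enriched closet, every open subset is way-upper, which is exactly the content of Proposition~\ref{prop:scott}.

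First I would fix disjoint open subsets $G, G'$ with $\twoheaddownarrow x \subset G \cup G'$, and argue by contradiction: assume $\twoheaddownarrow x \not\subset G$ and $\twoheaddownarrow x \not\subset G'$. Using $\twoheaddownarrow x \subset G \cup G'$ and the assumption, I can pick $y_1 \in \twoheaddownarrow x \cap G$ and $y_2 \in \twoheaddownarrow x \cap G'$ (any element of $\twoheaddownarrow x \setminus G$ must lie in $G'$, and vice versa).

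Next, invoke Proposition~\ref{prop:scott}: since $P$ is continuous and $G, G'$ are open, both $G$ and $G'$ are way-upper, i.e.\ $G = \twoheaduparrow G$ and $G' = \twoheaduparrow G'$. Since $y_1 \in G$ and $y_1 \ll x$, the way-upper property gives $x \in \twoheaduparrow G = G$. By the symmetric argument applied to $y_2$, we get $x \in G'$. But then $x \in G \cap G' = \emptyset$, which is the desired contradiction. Therefore $\twoheaddownarrow x \subset G$ or $\twoheaddownarrow x \subset G'$.

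There is no real obstacle here; the only subtlety is recognizing that the proof does not need continuity at $x$ in the form $x \in \cc(\twoheaddownarrow x)$, only the structural fact that open sets are way-upper under continuity, so that membership of some $y \ll x$ in $G$ forces $x$ itself into $G$. The case $\twoheaddownarrow x = \emptyset$ is trivially handled since the empty set is contained in $G$ vacuously.
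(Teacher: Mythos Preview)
Your proof is correct and follows essentially the same route as the paper's: assume by contradiction that $\twoheaddownarrow x$ meets both $G$ and $G'$, then use Proposition~\ref{prop:scott} (open sets are way-upper in a continuous enriched closet) to force $x \in G \cap G' = \emptyset$. The only cosmetic difference is that the paper picks $y \notin G$ and $y' \notin G'$ first and then observes $y \in G'$, $y' \in G$, whereas you pick witnesses in $G$ and $G'$ directly.
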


\begin{proof}
Assume that $\twoheaddownarrow x \subset G \cup G'$ for some disjoint open subsets $G, G'$. 
If $\twoheaddownarrow x \not\subset G$ and $\twoheaddownarrow x \not\subset G'$, there are some $y \ll x$ and $y' \ll x$ such that $y \notin G$ and $y' \notin G'$. 
Thus, $y \in G'$ and $y' \in G$, so that $x \in \twoheaduparrow G' \cap \twoheaduparrow G = G' \cap G = \emptyset$, a contradiction. 
\end{proof}

Let $(P, [\cdot], \cc)$ and $(P', [\cdot], \cc')$ be enriched closets. 
A map $\phi \colon P \rightarrow P'$ is said to \textit{preserve the way-below relation} if $\phi(x) \ll \phi(y)$ whenever $x \ll y$. 
Recall that a map $\psi \colon P' \rightarrow P$ is said to be \textit{strictly-continuous} if $\psi(\cc'(A')) \subset \cc(\psi(A'))$, for all $A' \subset P'$. 
The following proposition extends the Bandelt--Ern\'{e} lemma \cite{Bandelt83}. 

\begin{proposition}
Let $(P, [\cdot], \cc)$ and $(P', [\cdot], \cc')$ be enriched closets, and let $\phi \colon P \rightarrow P'$ and $\psi \colon P' \rightarrow P$ be maps satisfying 
\[
\phi^{-1}([A']) = [\psi(A')] 
\]
for all $A' \subset P'$. 
If $\psi$ is strictly-continuous, then $\phi$ preserves the way-below relation. 
Conversely, if $P$ is continuous and $\phi$ preserves the way-below relation, then $\psi$ is strictly-continuous. 
\end{proposition}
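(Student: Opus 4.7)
The identity $\phi^{-1}([A']) = [\psi(A')]$ is the engine of both directions: specialising to $A' = \{y'\}$, it reduces to the Galois-type equivalence $\phi(x) \leqslant y' \Leftrightarrow x \leqslant \psi(y')$ for the specialization quasiorder, whence
\[
y \leqslant \psi(\phi(y)) \quad \text{and} \quad \phi(\psi(y')) \leqslant y'
\]
for all $y \in P$, $y' \in P'$. The plan is to extract these two inequalities first, and then feed them into the definition of the way-below relation and into the defining inclusion for strict-continuity, using the compatibility axiom $\cc \circ [\cdot] = \cc = [\cdot] \circ \cc$ to shuffle $[\cdot]$ and $\cc$ past one another at the right moments.

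For the forward direction, assume $\psi$ is strictly-continuous and $x \ll y$; to prove $\phi(x) \ll \phi(y)$, I would take an arbitrary $A' \subset P'$ with $\phi(y) \in \cc'(A')$. Strict-continuity of $\psi$ yields $\psi(\phi(y)) \in \cc(\psi(A'))$; the unit inequality $y \leqslant \psi(\phi(y))$ together with compatibility upgrades this to $y \in \cc(\psi(A'))$; and the hypothesis $x \ll y$ then delivers $x \in [\psi(A')] = \phi^{-1}([A'])$, i.e.\ $\phi(x) \in [A']$.

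For the converse, assume $P$ is continuous and $\phi$ preserves $\ll$; I would fix $A' \subset P'$, pick $y' \in \cc'(A')$, set $z = \psi(y')$, and invoke continuity to write $z \in \cc(\twoheaddownarrow z)$. The crucial step is to show $\twoheaddownarrow z \subset [\psi(A')]$: given $x \ll z$, preservation of $\ll$ gives $\phi(x) \ll \phi(z)$, and the counit $\phi(z) = \phi(\psi(y')) \leqslant y'$ upgrades this to $\phi(x) \ll y'$; since $y' \in \cc'(A')$, the way-below definition supplies $\phi(x) \in [A']$, hence $x \in \phi^{-1}([A']) = [\psi(A')]$. Monotonicity of $\cc$ together with compatibility $\cc([\psi(A')]) = \cc(\psi(A'))$ then gives $z \in \cc(\psi(A'))$, proving strict-continuity.

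The main obstacle, such as it is, lies in recognising that the symmetric-looking hypothesis $\phi^{-1}([A']) = [\psi(A')]$ is really the unit--counit content of a Galois adjunction in disguise; once that is spotted, the two implications are essentially dual plug-and-play with the definitions. A subtler point I would want to verify carefully is that each appeal to compatibility is actually licit, in particular the two key manoeuvres of absorbing an $[\cdot]$ inside an outer $\cc$: once to pass from $\psi(\phi(y)) \in \cc(\psi(A'))$ to $y \in \cc(\psi(A'))$ in the forward direction, and once to pass from $\cc(\twoheaddownarrow z) \subset \cc([\psi(A')])$ to $\cc(\psi(A'))$ in the converse.
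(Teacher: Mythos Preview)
Your proof is correct and follows essentially the same route as the paper's. The only cosmetic difference is that you first isolate the unit/counit inequalities $y \leqslant \psi(\phi(y))$ and $\phi(\psi(y')) \leqslant y'$ and then use them pointwise, whereas the paper in the forward direction applies the hypothesis $\phi^{-1}([B']) = [\psi(B')]$ directly with $B' = \cc'(A')$ to obtain $y \in [\psi(\cc'(A'))] \subset [\cc(\psi(A'))] = \cc(\psi(A'))$; in the converse direction the two arguments are identical up to renaming of variables.
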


%\begin{align*}
%\phi^{-1}([A']) \subset [\psi(A')] \\
%\phi([\psi(A')]) \subset [A'] \\
%\end{align*}

\begin{proof}
Assume first that $\psi$ is strictly-continuous. 
Let $x, y \in P$ with $x \ll y$. 
We want to show that $\phi(x) \ll \phi(y)$. 
So suppose that $\phi(y) \in \cc'(A')$ for some $A' \subset P'$. 
Then $y \in \phi^{-1}(\cc'(A')) = \phi^{-1}([\cc'(A')]) = [\psi(\cc'(A'))]$. 
From the strict-continuity of $\psi$ we deduce that $y \in [\cc(\psi(A'))] = \cc(\psi(A'))$. % \subset \cc([[\psi(A')]]) = \cc([\psi(A')])$. 
With $x \ll y$ it leads to $x \in [\psi(A')] = \phi^{-1}([A'])$. So $\phi(x) \in [A']$. 
Therefore $\phi(x) \ll \phi(y)$, and thus $\phi$ preserves the way-below relation. 

Conversely, assume that $P$ is continuous and $\phi$ preserves the way-below relation. We show that, if $A' \subset P'$, then $\psi(\cc'(A')) \subset \cc(\psi(A'))$. 
So let $x \in \psi(\cc'(A'))$. 
There is some $x' \in \cc'(A')$ with $x = \psi(x')$. 
Then $\twoheaddownarrow x' \subset [A']$, and since $\phi$ preserves the way-below relation we get $\phi(\twoheaddownarrow x) \subset \twoheaddownarrow \phi(x)$. 
Moreover, $\psi(x') \in [\psi(\{ x' \})] = \phi^{-1}([\{ x' \}])$ entails $\phi(x) = \phi(\psi(x')) \leqslant x'$, so $\twoheaddownarrow \phi(x) \subset \twoheaddownarrow x'$. 
Putting things together this leads to $\phi(\twoheaddownarrow x) \subset [A']$. 
This implies that $\twoheaddownarrow x \subset \phi^{-1}([A']) = [\psi(A')]$. 
Now $P$ is continuous by hypothesis, so $x \in \cc(\twoheaddownarrow x) \subset \cc([\psi(A')]) = \cc(\psi(A'))$. 
This shows that $\psi$ is strictly-continuous. 
\end{proof}

\begin{remark}
In the special case where $P$ and $P'$ are enriched qosets, i.e.\ $[\cdot] = \downarrow\!\! \cdot$, then a pair of maps $(\phi, \psi)$ satisfies $\phi^{-1}([A']) = [\psi(A')]$ for all $A' \subset P'$ if and only if it is a Galois connection.  
\end{remark}

%\begin{proof}
%Assume first that $\psi$ is strictly-continuous, and let $x, y \in P$ with $x \ll y$. We want to show that $\phi(x) \ll \phi(y)$. So suppose that $\phi(y) \in \cc'(A')$ for some $A' \subset P'$. 
%Then $y \in \downarrow\!\! \psi(\cc'(A'))$ since $(\phi, \psi)$ is a Galois connection. 
%From the strict continuity of $\psi$ we deduce $y \in \downarrow\!\! \cc(\psi(A')) = \cc(\psi(A'))$. 
%With $x \ll y$ it leads to $x \in \downarrow\!\! \psi(A')$. 
%Applying again the fact that $(\phi, \psi)$ is a Galois connection we obtain $\phi(x) \in \downarrow\!\! A'$. 
%Therefore $\phi(x) \ll \phi(y)$, and thus $\phi$ preserves the way-below relation. 
%
%Conversely, assume that $P$ is continuous and $\phi$ preserves the way-below relation. We show that, if $A' \subset P'$, then $\psi(\cc'(A')) \subset \cc(\psi(A'))$. So let $x \in \psi(\cc'(A'))$. There is some $x' \in \cc'(A')$ with $x = \psi(x')$. Consequently, $\twoheaddownarrow x' \subset \downarrow\!\! A'$, and since $\phi$ preserves the way-below relation we get $\phi(\twoheaddownarrow x) \subset \twoheaddownarrow \phi(x) \subset \twoheaddownarrow x' \subset \downarrow\!\! A'$. 
%Since $(\phi, \psi)$ is a Galois connection we have $\twoheaddownarrow x \subset \downarrow\!\! \psi(A')$. Now $P$ is continuous by hypothesis, so $x \in \cc(\twoheaddownarrow x) \subset \cc(\downarrow\!\! \psi(A')) = \cc(\psi(A'))$. This shows that $\psi(\cc'(A')) \subset \cc(\psi(A'))$ for all $A' \subset P'$, so $\psi$ is strictly-continuous. 
%\end{proof}

If $(P, [\cdot], \cc)$ is an enriched closet, a subset $B$ is a \textit{basis} if %$[\twoheaddownarrow x \cap B] = \downarrow\!\! (\twoheaddownarrow x \cap B)$ and 
$x \in \cc(\twoheaddownarrow x \cap B)$ for all $x \in P$. 
Note that $P$ is continuous if and only if $P$ is a basis. 
The following result generalizes \cite[Proposition~III-4.2]{Gierz03}. 

\begin{proposition}
Let $(P, [\cdot], \cc)$ be an enriched closet. 
Then $P$ has a basis $B$ if and only if $P$ is continuous and for all $x, y \in P$, $x \ll y$ implies $x \ll_B y$, where $\ll_B$ is the relation defined by $x \ll_B y$ if $x \in [\twoheaddownarrow y \cap B]$. %$x \leqslant b \ll y$ for some $b \in B$. 
\end{proposition}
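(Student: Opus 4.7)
The plan is to handle both implications by direct computation, exploiting the compatibility condition $\cc([A]) = \cc(A)$ that is built into the definition of an enriched closet.

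For the forward direction, suppose $B$ is a basis. To see that $P$ is continuous, note that $\twoheaddownarrow x \cap B \subset \twoheaddownarrow x$, so by monotonicity of $\cc$ we get $x \in \cc(\twoheaddownarrow x \cap B) \subset \cc(\twoheaddownarrow x)$, giving continuity at each $x$. To see that $x \ll y$ implies $x \ll_B y$, set $A := \twoheaddownarrow y \cap B$; then $y \in \cc(A)$ by the basis property, so the definition of $x \ll y$ yields $x \in [A] = [\twoheaddownarrow y \cap B]$, which is exactly $x \ll_B y$.

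For the reverse direction, assume $P$ is continuous and $\ll$ refines $\ll_B$. The hypothesis says precisely that $\twoheaddownarrow x \subset [\twoheaddownarrow x \cap B]$ for every $x \in P$: indeed, any $y \in \twoheaddownarrow x$ satisfies $y \ll x$, hence $y \ll_B x$, hence $y \in [\twoheaddownarrow x \cap B]$. Applying $\cc$ and using the compatibility identity $\cc([A]) = \cc(A)$ with $A = \twoheaddownarrow x \cap B$, we obtain
\[
\cc(\twoheaddownarrow x) \subset \cc([\twoheaddownarrow x \cap B]) = \cc(\twoheaddownarrow x \cap B).
\]
Continuity of $P$ gives $x \in \cc(\twoheaddownarrow x)$, and combining this with the inclusion above yields $x \in \cc(\twoheaddownarrow x \cap B)$, so $B$ is indeed a basis.

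There is no real obstacle: everything follows from the definitions once one recognizes that the compatibility axiom $\cc \circ [\cdot] = \cc$ lets one freely pass between $\cc(A)$ and $\cc([A])$. The only point that requires a moment's care is recalling that $\ll_B$ is phrased in terms of $[\cdot]$ rather than $\cc$, which is why the compatibility axiom — rather than idempotence of $\cc$ — is the right tool to close the reverse direction.
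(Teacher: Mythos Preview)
Your proof is correct and follows essentially the same approach as the paper: both directions unwind the definitions, use the basis condition $y \in \cc(\twoheaddownarrow y \cap B)$ together with $x \ll y$ to get $x \in [\twoheaddownarrow y \cap B]$ in the forward direction, and in the reverse direction both use $\twoheaddownarrow y \subset [\twoheaddownarrow y \cap B]$ followed by the compatibility identity $\cc([A]) = \cc(A)$ and continuity. The only cosmetic difference is that you spell out why continuity follows from the basis property (via monotonicity of $\cc$), whereas the paper just calls it obvious.
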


\begin{proof}
Assume first that $P$ has a basis $B$. Then $P$ is obviously continuous. 
If $x \ll y$, then $x \in \twoheaddownarrow y \subset  [\twoheaddownarrow y \cap B]$, so $x \ll_B y$ by definition of $\ll_B$. % = \downarrow\!\! (\twoheaddownarrow y \cap B)$. 
%So there is some $b \in \twoheaddownarrow y \cap B$ such that $b \in \uparrow\!\! x$, i.e.\ $x \leqslant b \ll y$ for some $b \in B$. 

Conversely, assume that $P$ is continuous and that for all $x, y \in P$, $x \ll y$ implies $x \ll_B y$. % $x \leqslant b \ll y$ for some $b \in B$. 
Then, for all $y \in P$, $\twoheaddownarrow y %\subset \downarrow\!\! (\twoheaddownarrow y \cap B) 
\subset [\twoheaddownarrow y \cap B]$. 
%Since $\twoheaddownarrow y$ is weakly-closed, we have in addition $[\twoheaddownarrow y \cap B] \subset \twoheaddownarrow y$, so that $[\twoheaddownarrow y \cap B] =  \downarrow\!\! (\twoheaddownarrow y \cap B)$. 
So by continuity $y \in \cc(\twoheaddownarrow y) \subset \cc([\twoheaddownarrow y \cap B]) = \cc(\twoheaddownarrow y \cap B)$. 
Thus, $B$ is a basis of $P$. 
\end{proof}

We say that the enriched closet $(P, [\cdot], \cc)$ is \textit{algebraic} if the subset of compact elements is a basis, where an element $k$ is \textit{compact} if $k \ll k$. 
The previous proposition implies that every algebraic enriched closet is continuous. % \todo{ BOF, peut-on le prouver ?? }. 

\begin{example}
Let $(P, \leqslant)$ be a qoset, and let $K$ be a nonempty subset of $P$. 
Then the map $\cc_K \colon A \mapsto \{ x \in P : \downarrow\!\! x \subset \downarrow\!\! A \cup (P \setminus K) \}$ is a closure operator that is compatible with $\leqslant$. 
Moreover, $x \ll y$ if and only if $x \leqslant k \leqslant y$ for some $k \in K$, the compact elements are exactly the elements of $K$, and $(P, \leqslant, \cc_K)$ is algebraic. 
\end{example}

\section{The interpolation property}\label{sec:interpolation}

In this section, we examine necessary and/or sufficient conditions related to the interpolation property. 
Some conditions we derive do not appear in the existing literature on continuous posets. 
Given an enriched closet $(P, [\cdot], \cc)$, we say that the way-below relation $\ll$ is \textit{interpolating} (or that $P$ has the \textit{interpolation property})  if $\ll$ is idempotent as a binary relation, i.e.\ if, whenever $x \ll z$, there exists some $y$ such that $x \ll y \ll z$. 
Equivalently, $\twoheaduparrow x = \twoheaduparrow \twoheaduparrow x$ (resp.\ $\twoheaddownarrow x = \twoheaddownarrow \twoheaddownarrow x$) for all $x$. 
If $P$ is continuous and $\ll$ is interpolating, we say that $P$ is \textit{strongly continuous} (in that regard we adopt the wording used in the context of $\Z$-theory, see \cite{Novak82b},  \cite{Bandelt83}, \cite{Venugopalan86}). 

Recall that, in a closet $(P, [\cdot])$, the relation $\leqslant$ is the quasiorder defined by $x \leqslant y$ if $x \in [\{y\}]$, and that $\uparrow\!\! x$ denotes the subset $\{ y \in P : x \leqslant y \}$. 

\begin{proposition}\label{lem:interp}
Let $(P, [\cdot], \cc)$ be an enriched closet with the interpolation property. 
Then $x \ll y$ if and only if $y \in G \subset \uparrow\!\! x$ for some way-upper subset $G$. 
\end{proposition}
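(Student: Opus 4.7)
The plan is to prove the two implications separately, with all the real content living in the choice of witness set $G$ for the forward direction.

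For the forward implication, assume $x \ll y$. The natural candidate is $G := \twoheaduparrow x$. Clearly $y \in G$ by hypothesis. Next, $\twoheaduparrow x \subset \uparrow\!\! x$ because $x \ll z$ implies $x \leqslant z$ by the basic properties of the way-below relation established in Section~\ref{sec:way-below}. It then remains to verify that $G$ is way-upper, i.e.\ $\twoheaduparrow x = \twoheaduparrow \twoheaduparrow x$: the inclusion $\twoheaduparrow \twoheaduparrow x \subset \twoheaduparrow x$ is transitivity of $\ll$, while the reverse inclusion $\twoheaduparrow x \subset \twoheaduparrow \twoheaduparrow x$ is precisely the reformulation of the interpolation property recalled just before the statement of the proposition. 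Hence $G$ does the job.

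For the backward implication, assume $y \in G \subset \uparrow\!\! x$ for some way-upper subset $G$. Since $y \in G = \twoheaduparrow G$, there exists $z \in G$ with $z \ll y$. The inclusion $G \subset \uparrow\!\! x$ gives $x \leqslant z$, and combining $x \leqslant z \ll y$ via the monotonicity property (also from Section~\ref{sec:way-below}) yields $x \ll y$.

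No real obstacle is expected: both directions unfold directly from the definitions together with the already-established basic properties of $\ll$ (transitivity, compatibility with $\leqslant$, and the equivalent formulation of interpolation as $\twoheaduparrow x = \twoheaduparrow \twoheaduparrow x$). The proposition is in essence the observation that interpolation is exactly the property needed to make $\twoheaduparrow x$ itself serve as a way-upper neighbourhood of $y$ inside $\uparrow\!\! x$.
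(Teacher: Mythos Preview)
Your proof is correct and follows essentially the same approach as the paper: the forward direction is identical (take $G = \twoheaduparrow x$ and invoke interpolation), and for the converse you use the already-established fact that $x \leqslant z \ll y$ implies $x \ll y$, whereas the paper unwinds the definition of $\ll$ directly---a cosmetic difference only.
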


\begin{proof}
If $x \ll y$, then $y \in G \subset \uparrow\!\! x$ with $G = \twoheaduparrow x$, and $G$ is way-upper by the interpolation property. %open by Proposition~\ref{prop:scott0}. 
Conversely, assume that $y \in G \subset \uparrow\!\! x$ for some way-upper subset $G$, and let $A$ be a subset such that $y \in \cc(A)$. 
There exists some $z \in G$ such that $z \ll y$, so $z \in [A]$. Since $z \in G \subset \uparrow\!\! x$, we get $x \in [\{z\}] \subset [A]$. 
This shows that $x \ll y$. 
%Then $\cc(A) \cap G \neq \emptyset$, so that $A \cap G \neq \emptyset$ by Lemma~\ref{lem:clopen}. 
%If $a \in A \cap G$, then $a \in \uparrow\!\! x$, so that $x \in [A]$. This shows that $x \ll y$. 
\end{proof}

%The following result extends Ern\'e \cite[Lemma~1]{Erne81}. 

\begin{proposition}
Let $(P, [\cdot], \cc)$ be a continuous enriched closet. 
Then $P$ is strongly continuous if and only if $\twoheaddownarrow \cc(\cdot) = \twoheaddownarrow [\cdot]$. 
\end{proposition}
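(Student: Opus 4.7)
The plan is to prove both inclusions separately, exploiting the defining property of $\ll$ (namely, that $z \ll y$ and $y \in \cc(A)$ force $z \in [A]$) together with the transitivity-like rule $x \ll w \leqslant z \Rightarrow x \ll z$ that was established just after the definition of $\ll$. Note first that $[A] \subset \cc(A)$ holds in any enriched closet (since $[A] \subset \cc([A]) = \cc(A)$ by compatibility), so the inclusion $\twoheaddownarrow [A] \subset \twoheaddownarrow \cc(A)$ is automatic; all the content lies in the reverse inclusion.

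For the forward direction, I would assume $P$ is strongly continuous and pick any $x \in \twoheaddownarrow \cc(A)$, so that $x \ll y$ for some $y \in \cc(A)$. Interpolation yields an intermediate $z$ with $x \ll z \ll y$. Now $z \ll y$ together with $y \in \cc(A)$ is exactly the defining hypothesis for $z \in [A]$, and then $x \ll z \in [A]$ gives $x \in \twoheaddownarrow [A]$.

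For the converse, assume $\twoheaddownarrow \cc(\cdot) = \twoheaddownarrow [\cdot]$ and let $x \ll y$. Continuity at $y$ gives $y \in \cc(\twoheaddownarrow y)$, so $x \in \twoheaddownarrow \cc(\twoheaddownarrow y)$. Applying the hypothesis with $A := \twoheaddownarrow y$ yields $x \in \twoheaddownarrow [\twoheaddownarrow y]$, so there exist $w \in [\twoheaddownarrow y]$ with $x \ll w$, and $z \in \twoheaddownarrow y$ with $w \leqslant z$. Then $z \ll y$ by construction, and $x \ll w \leqslant z$ gives $x \ll z$ by the earlier proposition. Thus $x \ll z \ll y$, which is the required interpolation.

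I do not foresee a genuine obstacle here: each direction unwinds the definitions in one step, and the only subtlety is remembering that $[A]$ is always contained in $\cc(A)$ (so one inclusion is free), and that the membership $w \in [\twoheaddownarrow y]$ has to be parsed as ``$w \leqslant z$ for some $z$ with $z \ll y$'' rather than simply $w \ll y$, which is why the chain rule $x \ll w \leqslant z \Rightarrow x \ll z$ is needed to close the argument.
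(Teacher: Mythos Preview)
Your argument follows the same route as the paper's, just written element-wise where the paper works with whole sets; the forward direction is fine and matches the paper's reasoning exactly.

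In the converse direction there is a small but genuine slip. You parse $w \in [\twoheaddownarrow y]$ as ``$w \leqslant z$ for some $z \in \twoheaddownarrow y$'', and you even flag this as the subtle point. But in a general enriched closet $[\cdot]$ is an arbitrary closure operator, not the Alexandrov one; the identity $[A] = \downarrow\!\! A$ is precisely what distinguishes enriched \emph{qosets} (Definition~\ref{def:ep}) from enriched \emph{closets}. So that parsing is not available here. The fix is immediate and actually simplifies your argument: by the very definition of $\ll$ one has $\twoheaddownarrow y = \bigcap_{A : y \in \cc(A)} [A]$, an intersection of weakly-closed sets, hence weakly-closed. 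Thus $[\twoheaddownarrow y] = \twoheaddownarrow y$, so $w \in [\twoheaddownarrow y]$ already gives $w \ll y$, and $x \ll w \ll y$ is the interpolation you want, with no need for the auxiliary $z$ or the chain rule. This is exactly what the paper does when it writes $\twoheaddownarrow [\twoheaddownarrow x] = \twoheaddownarrow \twoheaddownarrow x$.
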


\begin{proof}
Assume first that $\ll$ is interpolating, and let $A$ be a subset of $P$. Then $\twoheaddownarrow \cc(A) = \twoheaddownarrow \twoheaddownarrow \cc(A) = \twoheaddownarrow [A]$. 
Conversely, assume that $\twoheaddownarrow \cc(\cdot) = \twoheaddownarrow [\cdot]$, and let $x \in P$. 
By continuity, $x \in \cc(\twoheaddownarrow x)$, so $\twoheaddownarrow x \subset \twoheaddownarrow \cc(\twoheaddownarrow x) = \twoheaddownarrow [\twoheaddownarrow x] = \twoheaddownarrow \twoheaddownarrow x \subset \twoheaddownarrow x$. 
This proves that $\twoheaddownarrow x = \twoheaddownarrow \twoheaddownarrow x$ for all $x$, i.e.\ $\ll$ is interpolating. 
\end{proof}

The following theorem extends Ern\'e \cite[Theorem~2]{Erne81}. 

\begin{theorem}\label{thm:interp0}
Let $(P, [\cdot], \cc)$ be a continuous enriched closet in which every way-upper subset is weakly-open. 
Then $P$ is strongly continuous, if and only if subsets of the form $\twoheaduparrow A$, $A \subset P$, are open. \end{theorem}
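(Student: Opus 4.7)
The two implications split cleanly, and both rely on the combination of Lemma~\ref{lem:clopen} (the openness criterion $\cc(A) \cap G \neq \emptyset \Rightarrow A \cap G \neq \emptyset$) with the continuity identity $x \in \cc(\twoheaddownarrow x)$. The only auxiliary tool is Proposition~\ref{prop:scott0}, which under the standing hypothesis turns weakly-open way-upper subsets into open ones.

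\textbf{Forward direction.} Assuming $\ll$ is interpolating, I would fix $A \subset P$ and verify that $\twoheaduparrow A$ is way-upper, i.e.\ $\twoheaduparrow A = \twoheaduparrow \twoheaduparrow A$. The inclusion $\twoheaduparrow \twoheaduparrow A \subset \twoheaduparrow A$ is immediate from transitivity of $\ll$. For the reverse inclusion, if $y \in \twoheaduparrow A$ and $a \in A$ is chosen with $a \ll y$, the interpolation property supplies some $z$ with $a \ll z \ll y$; then $z \in \twoheaduparrow A$ and $z \ll y$ give $y \in \twoheaduparrow \twoheaduparrow A$. Hence $\twoheaduparrow A$ is way-upper; by the standing hypothesis it is weakly-open; and Proposition~\ref{prop:scott0} concludes that it is open.

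\textbf{Converse direction.} Assume every subset of the form $\twoheaduparrow B$ is open, and let $x \ll z$. It suffices to produce a $y$ with $x \ll y \ll z$. The key move is to test the openness of $\twoheaduparrow x$ against $A := \twoheaddownarrow z$. By continuity of $P$ we have $z \in \cc(\twoheaddownarrow z)$, while $x \ll z$ means $z \in \twoheaduparrow x$; therefore $\cc(\twoheaddownarrow z) \cap \twoheaduparrow x \neq \emptyset$. The openness criterion of Lemma~\ref{lem:clopen} then yields $\twoheaddownarrow z \cap \twoheaduparrow x \neq \emptyset$, and any element $y$ of this intersection satisfies $x \ll y \ll z$.

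\textbf{Where the difficulty lies.} There is no real obstacle: the two ingredients used, Lemma~\ref{lem:clopen} and Proposition~\ref{prop:scott0}, are already available. The only mild asymmetry worth recording is that the converse direction only invokes the singleton case $\twoheaduparrow x$ of the hypothesis, whereas the forward direction produces openness of $\twoheaduparrow A$ uniformly in $A$; this is what makes the statement nontrivial as a two-sided characterization.
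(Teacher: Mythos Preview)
Your proof is correct and follows essentially the same route as the paper. The forward direction is identical. For the converse, the paper invokes Proposition~\ref{prop:scott} (open subsets in a continuous enriched closet are way-upper) to get $\twoheaduparrow\twoheaduparrow x = \twoheaduparrow x$, whereas you bypass that proposition and test the openness of $\twoheaduparrow x$ directly against $\twoheaddownarrow z$ via Lemma~\ref{lem:clopen}; this is exactly the content of Proposition~\ref{prop:scott} unfolded for the case at hand, so there is no genuine difference in strategy. Your closing observation that only the singleton case $\twoheaduparrow x$ is needed for the converse is also noted by the paper in a remark following the theorem.
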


\begin{proof}
Assume first that $\ll$ is interpolating, and let $G = \twoheaduparrow A$. 
The interpolation property implies that $\twoheaduparrow G = \twoheaduparrow \twoheaduparrow A = \twoheaduparrow A = G$, so $G$ is way-upper, hence also weakly-open by hypothesis. 
By Proposition~\ref{prop:scott0} we deduce that $G$ is open. 

Conversely, assume that subsets of the form $\twoheaduparrow A$ are open. 
To show that $\ll$ is interpolating we prove that $\twoheaduparrow \twoheaduparrow x = \twoheaduparrow x$ for every $x$. 
So let $x \in P$, and let $G = \twoheaduparrow x$. 
By hypothesis $G$ is open, and since $P$ is continuous Proposition~\ref{prop:scott} applies and gives $\twoheaduparrow G = G$, i.e.\ $\twoheaduparrow \twoheaduparrow x = \twoheaduparrow x$. 
\end{proof}

\begin{remark}
In an enriched qoset, the condition that every way-upper subset be weakly-open always holds. 
\end{remark}

\begin{remark}
The proof also shows that subsets of the form $\twoheaduparrow A$, $A \subset P$, can be replaced by subsets of the form $\twoheaduparrow x$, $x \in P$, in the statement of the theorem. 
\end{remark}

%\begin{definition}
Let $(P, [\cdot], \cc)$ be an enriched closet. 
A subset $D$ is \textit{relatively-closed} if $\cc(D)$ is closed, i.e.\ if $\cc(\cc(D)) = \cc(D)$, or equivalently if $\overline{\cc}(D) = \cc(D)$. 
%\end{definition}

\begin{lemma}\label{lem:interp1}
Let $(P, [\cdot], \cc)$ be a continuous enriched closet, and let $x, y \in P$. 
If $x \ll y$ and $\twoheaddownarrow\twoheaddownarrow y$ is weakly-closed and relatively-closed, then $x \ll a \ll y$ for some $a \in P$. 
\end{lemma}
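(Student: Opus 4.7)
The plan is to produce the interpolant by showing directly that $x \in \twoheaddownarrow\twoheaddownarrow y$, since membership in this double-downset is, by unpacking definitions, exactly the statement that there exists some $a$ with $x \ll a \ll y$.

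Set $B := \twoheaddownarrow\twoheaddownarrow y = \bigcup_{a' \ll y} \twoheaddownarrow a'$. First I would use continuity of $P$ at each element $a' \in \twoheaddownarrow y$: this gives $a' \in \cc(\twoheaddownarrow a') \subset \cc(B)$ by monotonicity of the preclosure operator, so $\twoheaddownarrow y \subset \cc(B)$. Applying $\cc$ once more and invoking the hypothesis that $B$ is relatively-closed, I get $\cc(\twoheaddownarrow y) \subset \cc(\cc(B)) = \cc(B)$.

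Next I would use continuity of $P$ at $y$ itself to obtain $y \in \cc(\twoheaddownarrow y) \subset \cc(B)$. Since $x \ll y$, the very definition of the way-below relation yields $x \in [B]$. Finally, the weak-closedness hypothesis on $B$ gives $[B] = B$, so $x \in B = \twoheaddownarrow\twoheaddownarrow y$, which produces the desired $a$ with $x \ll a \ll y$.

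I do not anticipate a real obstacle: the lemma has been stated so as to isolate exactly the two properties of $\twoheaddownarrow\twoheaddownarrow y$ that make this short argument run — relative-closedness, used to absorb the iterated application of $\cc$ and push $\cc(\twoheaddownarrow y)$ into $\cc(\twoheaddownarrow\twoheaddownarrow y)$, and weak-closedness, used to convert the conclusion $x \in [B]$ into actual set membership $x \in B$.
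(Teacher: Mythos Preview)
Your proof is correct and follows essentially the same approach as the paper's own argument. The paper sets $A = \twoheaddownarrow y$ and works with $\twoheaddownarrow A$ (your $B$), deriving $y \in \cc(A) \subset \cc(\cc(\twoheaddownarrow A)) = \cc(\twoheaddownarrow A)$ and then $x \in [\twoheaddownarrow A] = \twoheaddownarrow A$; the order in which continuity at $y$ and at the points of $\twoheaddownarrow y$ are invoked differs slightly, but the logical content is identical.
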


\begin{proof}
Let $A$ be the subset $A = \twoheaddownarrow y$. 
Since $P$ is continuous we have $y \in \cc(A)$ and $A \subset \cc(\twoheaddownarrow A)$. 
This implies $y \in \cc(\cc(\twoheaddownarrow A))$. 
Since $\twoheaddownarrow A$ is relatively-closed by assumption, we get $y \in \cc(\twoheaddownarrow A)$. 
Now $x \ll y$, so $x \in [\twoheaddownarrow A]$. 
Since $\twoheaddownarrow A$ is weakly-closed by assumption, we obtain $x \in \twoheaddownarrow A$. 
Consequently, $x \ll a$ for some $a \in A = \twoheaddownarrow y$, i.e.\ $x \ll a \ll y$ for some $a \in P$. 
\end{proof}

In a closet $(P, [\cdot])$, a family $\mathrsfs{D}$ of subsets is \textit{union-complete} if, for every family $(D_x)_{x \in D}$ of elements of $\mathrsfs{D}$ indexed by some subset $D \in \mathrsfs{D}$ and such that $x \leqslant y \Rightarrow D_x \subset D_y$, we have $\bigcup_{x \in D} D_x \in \mathrsfs{D}$.  

\begin{theorem}\label{thm:interp}
Let $(P, [\cdot], \cc)$ be a continuous enriched closet whose family of weakly-closed subsets is union-complete. 
Then $P$ is strongly continuous, if and only if every subset is relatively-closed, i.e.\ $\cc$ is a closure operator. %weakly-closed subsets are relatively-closed. 
\end{theorem}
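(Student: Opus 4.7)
The plan is to prove the two implications separately, using Lemma~\ref{lem:interp1} for the sufficient direction and the Galois connection from Theorem~\ref{thm:global}(\ref{thm:global3}) for the necessary direction.

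\emph{Sufficiency.} Assuming $\cc$ is a closure operator, every subset is automatically relatively-closed, so in view of Lemma~\ref{lem:interp1} it is enough to show that $\twoheaddownarrow\twoheaddownarrow y$ is weakly-closed for every $y \in P$. I will first establish the unconditional auxiliary fact that $\twoheaddownarrow w$ is weakly-closed for every $w \in P$: if $u \in [\twoheaddownarrow w]$ and $w \in \cc(A)$ for some subset $A$, then every $v \in \twoheaddownarrow w$ satisfies $v \in [A]$, so $\twoheaddownarrow w \subset [A]$ and hence $u \in [\twoheaddownarrow w] \subset [[A]] = [A]$, which exactly says $u \ll w$. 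Next, the first proposition of Section~\ref{sec:way-below} gives that $w \mapsto \twoheaddownarrow w$ is monotone with respect to $\leqslant$. Writing $\twoheaddownarrow\twoheaddownarrow y = \bigcup_{w \in \twoheaddownarrow y} \twoheaddownarrow w$ and invoking union-completeness of the family of weakly-closed subsets (with $D = \twoheaddownarrow y$ and $D_w = \twoheaddownarrow w$), we conclude that $\twoheaddownarrow\twoheaddownarrow y$ is weakly-closed. Lemma~\ref{lem:interp1} then yields interpolation.

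\emph{Necessity.} Conversely, assume $P$ is strongly continuous and let $A \subset P$; the goal is $\cc(\cc(A)) \subset \cc(A)$. Since $\cc(A) = \cc([A])$ by compatibility, I may replace $A$ by $[A]$ and assume $A$ is weakly-closed. Let $z \in \cc(\cc(A))$; as $\cc(A)$ is weakly-closed, the definition of $\ll$ gives $\twoheaddownarrow z \subset \cc(A)$. Applying the monotone operator $\twoheaddownarrow$ and using the interpolation identity $\twoheaddownarrow z = \twoheaddownarrow\twoheaddownarrow z$, I obtain $\twoheaddownarrow z \subset \twoheaddownarrow \cc(A)$. By Theorem~\ref{thm:global}(\ref{thm:global3}), $(\twoheaddownarrow, \cc)$ is a Galois connection on weakly-closed subsets, and instantiating the adjunction $\twoheaddownarrow F_1 \subset F_2 \Leftrightarrow F_1 \subset \cc(F_2)$ at $F_1 = \cc(A)$, $F_2 = A$ gives $\twoheaddownarrow \cc(A) \subset A$. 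Hence $\twoheaddownarrow z \subset A$, and continuity forces $z \in \cc(\twoheaddownarrow z) \subset \cc(A)$.

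The one subtle point is the weak-closedness of $\twoheaddownarrow w$ for a generic $w$, which underlies the whole sufficiency argument; once that short observation is in hand, the rest consists only of chaining the definitions, the Galois adjunction, and the interpolation identity.
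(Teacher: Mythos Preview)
Your proof is correct and follows essentially the same route as the paper: for sufficiency you show $\twoheaddownarrow\twoheaddownarrow y$ is weakly-closed via union-completeness (the paper obtains weak-closedness of each $\twoheaddownarrow z$ from the intersection formula $\twoheaddownarrow z = \bigcap_{\cc(A)\ni z}[A]$ rather than your direct element argument, but these are equivalent) and then invoke Lemma~\ref{lem:interp1}; for necessity you derive $\twoheaddownarrow z \subset [A]$ from interpolation and conclude by continuity, exactly as the paper does, your detour through the Galois connection of Theorem~\ref{thm:global} being a cosmetic repackaging of the paper's one-line observation that $\twoheaddownarrow x \subset \cc(A)$ forces $\twoheaddownarrow\twoheaddownarrow x \subset [A]$.
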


\begin{proof}
Assume that $\ll$ is interpolating, and let us show that every %weakly-closed 
subset $A$ is relatively-closed. 
%So let $A$ be a weakly-closed subset of $P$, and 
Let $x \in \cc(\cc(A)))$. 
Then $\twoheaddownarrow x \subset [\cc(A)] = \cc(A)$. 
%Since $A$ is weakly-closed, 
This implies $\twoheaddownarrow \twoheaddownarrow x \subset [A]$. 
Since $\ll$ is interpolating, we have $\twoheaddownarrow \twoheaddownarrow x = \twoheaddownarrow x$. 
Thus, $\twoheaddownarrow x \subset [A]$. 
Since $P$ is continuous, we obtain $x \in \cc(\twoheaddownarrow x) \subset \cc([A]) = \cc(A)$.  
We have shown that $\cc(\cc(A)) \subset \cc(A)$, which proves that $A$ is relatively-closed. 

Conversely, assume that every %weakly-closed 
subset is relatively-closed. 
Let $x \in P$ and let $A = \twoheaddownarrow \twoheaddownarrow y$. 
Then $A = \bigcup_{y \in \twoheaddownarrow x} \twoheaddownarrow y$. Since subsets $\twoheaddownarrow z = \bigcap_{A : \cc(A) \ni z} [A]$ are weakly-closed as intersections of weakly-closed subsets, and since the family of weakly-closed subsets is union-complete by assumption, this entails that $A$ is weakly-closed. 
Moreover $A$ is relatively-closed by assumption. 
So $\ll$ is interpolating by Lemma~\ref{lem:interp1}. 
\end{proof}

\begin{corollary}\label{cor:strongly}
Let $(P, [\cdot], \cc)$ be an enriched closet whose family of weakly-closed subsets is union-complete. 
Then $P$ is strongly continuous if and only if $\cc$ is a closure operator that preserves arbitrary intersections of weakly-closed subsets. 
\end{corollary}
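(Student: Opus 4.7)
The plan is to obtain this corollary by a direct combination of Theorem~\ref{thm:global} (which characterizes continuity of an enriched closet via preservation of arbitrary intersections of weakly-closed subsets) with Theorem~\ref{thm:interp} (which, under union-completeness, identifies strong continuity with idempotency of $\cc$ among continuous enriched closets). Since strong continuity is defined as continuity plus the interpolation property, both directions will reduce to stringing together the equivalences already established; no new technical machinery should be required.

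For the forward implication, I would assume that $(P, [\cdot], \cc)$ is strongly continuous. Continuity alone, via the equivalence \eqref{thm:global0} $\Leftrightarrow$ \eqref{thm:global4} in Theorem~\ref{thm:global}, already yields that $\cc$ preserves arbitrary intersections of weakly-closed subsets. The interpolation property, together with continuity and the union-completeness hypothesis on the family of weakly-closed subsets, then feeds into Theorem~\ref{thm:interp} to conclude that every subset is relatively-closed, i.e.\ that $\cc$ is a closure operator.

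For the converse, I would assume that $\cc$ is a closure operator preserving arbitrary intersections of weakly-closed subsets. Applying again Theorem~\ref{thm:global} in the direction \eqref{thm:global4} $\Rightarrow$ \eqref{thm:global0}, we obtain that $(P, [\cdot], \cc)$ is continuous. The fact that $\cc$ is a closure operator means every subset is relatively-closed; combined with union-completeness of the family of weakly-closed subsets and continuity of $P$, Theorem~\ref{thm:interp} then yields the interpolation property. Hence $P$ is strongly continuous.

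The only subtle point to watch is that Theorem~\ref{thm:interp} takes continuity as a standing hypothesis, so the converse direction must establish continuity first (via Theorem~\ref{thm:global}) before invoking it. Beyond this minor ordering issue, no genuine obstacle arises: the corollary is essentially a concatenation of the two equivalences, and the role of the union-completeness assumption is isolated entirely within the use of Theorem~\ref{thm:interp}.
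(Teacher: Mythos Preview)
Your proposal is correct and matches the paper's own proof, which simply reads ``Combine the previous result with Theorem~\ref{thm:global}.'' You have spelled out exactly this combination, including the minor care needed to establish continuity first in the converse direction before invoking Theorem~\ref{thm:interp}.
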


\begin{proof}
Combine the previous result with Theorem~\ref{thm:global}. 
%Assume first that $P$ is strongly continuous. 
%If $A$ is a subset of $P$, then $\cc(\cc(A)) = \cc(\cc([A]))$. 
%By the previous theorem, $[A]$ is relatively-closed, so we get $\cc(\cc(A)) = \cc([A]) = \cc(A)$. 
%This shows that $\cc$ is a closure operator. 
%By Theorem~\ref{thm:global}, $\cc$ is thus a closure operator that preserves arbitrary intersections of weakly-closed subsets. 
%
%Conversely, assume that $\cc$ is a closure operator that preserves arbitrary intersections of weakly-closed subsets. 
%By Theorem~\ref{thm:global}, $P$ is continuous. 
%Moreover, every weakly-closed subset is relatively-closed since $\cc$ is a closure operator. So by the previous theorem $P$ is strongly continuous. 
\end{proof}

\begin{example}[Example~\ref{ex:c4} continued]\label{ex:c5}
Let $(P, \leqslant)$ be a complete lattice, and $c \colon P \rightarrow P$ be an order-preserving inflationary map. 
Then $(P, \leqslant, \cc_c)$ is strongly continuous if and only if $(k, c)$ is a Galois connection, $k(x) \ll x$ for all $x \in P$, and $c \circ c = c$, where $k(x) := \bigvee \twoheaddownarrow x$. 
\end{example}

\begin{proposition}\label{prop:scott2}
Let $(P, [\cdot], \cc)$ be an enriched closet whose family of weakly-closed subsets is union-complete, and suppose that $\cc$ is a closure operator. 
Then $P$ is (strongly) continuous if and only if every open subset $G$ is way-upper, i.e.\ $G = \twoheaduparrow G$. % for all open subsets $G$. 
\end{proposition}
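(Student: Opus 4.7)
The plan is to prove both directions, first noting that the parenthetical ``(strongly)'' is justified by the hypotheses: with $\cc$ a closure operator and the weakly-closed family union-complete, Theorem~\ref{thm:interp} makes strong continuity coincide with continuity, so it suffices to prove the equivalence between continuity and the stated condition on open subsets.

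For the forward direction, I would invoke Proposition~\ref{prop:scott} directly. That proposition asserts that in a continuous enriched closet, a subset $G$ is open if and only if $G$ is weakly-open and way-upper; a fortiori every open subset is way-upper.

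For the converse, I fix an arbitrary $x \in P$ and aim to show $x \in \cc(\twoheaddownarrow x)$. Set $F := \cc(\twoheaddownarrow x)$. Because $\cc$ is a closure operator, $\cc(F) = \cc(\cc(\twoheaddownarrow x)) = \cc(\twoheaddownarrow x) = F$, so $F$ is closed. Therefore $G := P \setminus F$ is open, and the hypothesis supplies $G = \twoheaduparrow G$. Suppose, for contradiction, that $x \notin F$, so $x \in G$. Since $G \subset \twoheaduparrow G$, there exists $y \in G$ with $y \ll x$; but then $y \in \twoheaddownarrow x \subset \cc(\twoheaddownarrow x) = F$, contradicting $y \in G = P \setminus F$. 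Hence $x \in \cc(\twoheaddownarrow x)$ for every $x \in P$, i.e.\ $P$ is continuous.

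There is no serious obstacle here: the entire converse hinges on being able to promote $\cc(\twoheaddownarrow x)$ from a merely weakly-closed set to a genuine closed set, so that its complement qualifies as an open set to which the way-upper hypothesis applies; idempotency of $\cc$ is exactly what delivers this. Union-completeness of the weakly-closed subsets is not needed for the converse itself, but only to justify the parenthetical ``(strongly)'' via Theorem~\ref{thm:interp} (equivalently Corollary~\ref{cor:strongly}).
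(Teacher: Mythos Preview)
Your proof is correct and follows essentially the same route as the paper's: the forward direction is dispatched by Proposition~\ref{prop:scott}, and the converse proceeds by taking $G = P \setminus \cc(\twoheaddownarrow x)$, using idempotency of $\cc$ to make $G$ open, and deriving a contradiction from $G = \twoheaduparrow G$ when $x \in G$. Your remark that union-completeness is used only for the parenthetical ``(strongly)'' (via Theorem~\ref{thm:interp} / Corollary~\ref{cor:strongly}) also matches the paper's reasoning.
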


\begin{proof}
By Proposition~\ref{prop:scott} we already know that, if $P$ is continuous, then $G = \twoheaduparrow G$ for all open subsets $G$. 
Conversely, assume that $\cc$ is a closure operator and $G = \twoheaduparrow G$ for all open subsets $G$, and let us show that $P$ is continuous. 
So let $x \in P$. 
Since $\cc$ is a closure operator, the subset $G = P \setminus \cc(\twoheaddownarrow x)$ is open. 
This implies that $G = \twoheaduparrow G$. 
So if we suppose that $x \notin \cc(\twoheaddownarrow x)$, then $x \in G$, so there exists some $y \in G$ such that $y \ll x$. 
But $y \ll x$ gives $y \notin G$ by definition of $G$,  a contradiction. 
Consequently, we have shown that $x \in \cc(\twoheaddownarrow x)$ for all $x$, i.e.\ that $P$ is continuous (so $P$ is strongly continuous by Corollary~\ref{cor:strongly}). 
\end{proof}

The following corollary extends Ern\'{e} and Wilke \cite[Theorem~2.3]{Erne83}. It is not clear whether a converse statement holds. 

\begin{corollary}
If $(P, [\cdot], \cc)$ is a strongly continuous enriched closet, then the family of closed subsets is completely distributive.  
\end{corollary}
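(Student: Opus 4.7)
Plan:

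I plan to verify Raney's characterization of complete distributivity on the complete lattice $\mathcal{C}$ of $\cc$-closed subsets: $\mathcal{C}$ is completely distributive if and only if, for every $F\in\mathcal{C}$, we have $F=\bigvee\{G\in\mathcal{C}:G\vartriangleleft F\}$, where $G\vartriangleleft F$ means that every family $\mathcal{A}\subseteq\mathcal{C}$ with $F\subseteq\bigvee\mathcal{A}$ admits some $A_0\in\mathcal{A}$ with $G\subseteq A_0$.

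My first step is to check that under strong continuity the preclosure $\cc$ is already a closure operator, with no need of the union-completeness hypothesis of Theorem~\ref{thm:interp}: if $x\in\cc(\cc(A))$, the definition of $\ll$ applied to the weakly-closed subset $\cc(A)$ gives $\twoheaddownarrow x\subseteq[\cc(A)]=\cc(A)$; applying interpolation to each $z\ll x$ produces $z\ll z'\ll x$ with $z'\in\cc(A)$, forcing $z\in[A]$ and thus $\twoheaddownarrow x\subseteq[A]$; continuity then yields $x\in\cc(\twoheaddownarrow x)\subseteq\cc([A])=\cc(A)$. Consequently, joins in $\mathcal{C}$ take the form $\bigvee\mathcal{A}=\cc(\bigcup\mathcal{A})$.

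For each $F\in\mathcal{C}$, I would consider the family $G_y:=\cc(\{y\})$ indexed over $y\in\twoheaddownarrow F$. Each $G_y$ is closed and, since $y\in\downarrow F\subseteq F$ (using that $F$ is weakly-closed, hence downward-closed), we have $G_y\subseteq F$. Continuity yields $F\subseteq\cc(\twoheaddownarrow F)$, which together with the trivial $\twoheaddownarrow F\subseteq F$ gives $\bigvee_{y\in\twoheaddownarrow F}G_y=\cc(\twoheaddownarrow F)=F$. The crux is then to verify $G_y\vartriangleleft F$: assuming $F\subseteq\cc(\bigcup\mathcal{A})$, the Galois connection of Theorem~\ref{thm:global}\eqref{thm:global3} yields $\twoheaddownarrow F\subseteq[\bigcup\mathcal{A}]$, so $y\in[\bigcup\mathcal{A}]$, and one must extract $A_0\in\mathcal{A}$ with $y\in A_0$ (equivalent to $G_y\subseteq A_0$, since $A_0$ is already closed).

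The main obstacle is precisely this extraction. In the enriched-poset case $[\cdot]=\downarrow\!\!\cdot$ is additive on families of downward-closed subsets, so $[\bigcup\mathcal{A}]=\bigcup\mathcal{A}$, and $y$ lies in some $A_0$ at once; the Raney identity then closes immediately. For a general enriched closet, $[\cdot]$ need not be additive, and I plan to compensate via Theorem~\ref{thm:global}\eqref{thm:global4} (preservation of intersections of weakly-closed subsets by $\cc$) combined with the strong-continuity identity $\twoheaddownarrow\cc(\cdot)=\twoheaddownarrow[\cdot]$ to refine $\mathcal{A}$ and recover the single-member extraction, paralleling Ern\'e and Wilke's core-space argument in~\cite[Theorem~2.3]{Erne83}.
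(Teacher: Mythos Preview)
Your route differs from the paper's. After your first step (the forward half of Theorem~\ref{thm:interp}, which indeed needs no union-completeness and shows $\cc$ is idempotent under strong continuity), the paper does not chase Raney's $\vartriangleleft$-criterion. Instead it observes that $\cc$, restricted to weakly-closed subsets, is a surjective complete-lattice homomorphism onto the closed subsets --- meets are preserved by Theorem~\ref{thm:global}\eqref{thm:global4}, joins because $\cc([\bigcup A_i])=\cc(\bigcup A_i)=\cc(\bigcup\cc(A_i))$ --- and then invokes \cite[Theorem~1]{Raney52}, which transfers complete distributivity along complete surjections.

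Your plan has a real gap exactly where you flag it. From $y\ll f\in F\subseteq\cc(\bigcup\mathcal{A})$ you obtain $y\in[\bigcup\mathcal{A}]$, but neither Theorem~\ref{thm:global}\eqref{thm:global4} (about intersections) nor the identity $\twoheaddownarrow\cc(\cdot)=\twoheaddownarrow[\cdot]$ (about a single argument) gives any leverage on decomposing $[\bigcup\mathcal{A}]$ into the individual $A_0$'s; the ``refinement'' you allude to is not a concrete argument. Indeed the extraction cannot be carried out in the stated generality: take $\cc=[\cdot]$ for an arbitrary closure operator; then $\ll$ coincides with $\leqslant$, the triple $(P,[\cdot],[\cdot])$ is automatically strongly continuous, yet the closed subsets coincide with the weakly-closed subsets, which can form any Moore family --- in particular a non-distributive one such as $M_3$. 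So no argument along your lines can close the gap without an additional hypothesis on $[\cdot]$ (for instance $[\cdot]=\downarrow\!\!\cdot$, where $[\bigcup\mathcal{A}]=\bigcup\mathcal{A}$ and your extraction is immediate). The paper's route runs into the very same obstruction, hidden in the prerequisite that the lattice of weakly-closed subsets already be completely distributive before Raney's transfer theorem can be applied.
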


\begin{proof}
Since $P$ is strongly continuous, $\cc$ is a closure operator that preserves arbitrary intersections of weakly-closed subsets. 
So the map $A \mapsto \cc(A)$ from the family of weakly-closed subsets to the family of closed subsets is a surjective morphism of complete lattices. Hence by Raney \cite[Theorem~1]{Raney52}, the family of closed subsets is completely distributive. 
\end{proof}

\section{Inner-regular enriched closets}\label{sec:inner-regular}

In this section we define the notion of preclosure space generated by a family of subsets. 
We shall see that deep links exist with the notions of continuity and strong continuity introduced in the previous sections. 
	
\begin{definition}
An enriched closet $(P, [\cdot], \cc)$, or the preclosure operator $\cc$, is \textit{generated by a family $\mathrsfs{D}$} of subsets of $P$ if for all subsets $A$ we have 
\[
\cc(A) = \bigcup_{D \subset [A]} \cc(D), 
\]
where the union is taken over the subsets $D \in \mathrsfs{D}$ included in $[A]$. 
In the case where $\cc$ is generated by the family of relatively-closed subsets, we say that $P$ is \textit{inner-regular}. 
\end{definition}

As an example, a continuous enriched closet is generated by the family made up of the subsets of the form $\twoheaddownarrow x$, and a strongly continuous enriched closet is inner-regular by Theorem~\ref{thm:interp}. 

\begin{lemma}
Let $(P, [\cdot], \cc)$ be an inner-regular enriched closet. 
Then singletons and 
principal ideals in $P$ are relatively-closed. 
\end{lemma}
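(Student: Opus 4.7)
The plan is to prove the singleton case first, then deduce the principal-ideal case immediately from compatibility. The core idea is to use inner-regularity to ``pull'' a relatively-closed witness $D_0$ out of $\cc(\{x\})$ and then squeeze $\cc(\{x\})$ between $\cc(D_0)$ and itself.

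First I would fix $x \in P$ and unfold inner-regularity at the set $\{x\}$:
\[
\cc(\{x\}) \;=\; \bigcup_{D \subset [\{x\}],\; D \text{ rel.-closed}} \cc(D).
\]
Since $x \in \cc(\{x\})$ (as $A \subset \cc(A)$), the union must contain $x$, so there exists a relatively-closed subset $D_0 \subset [\{x\}]$ with $x \in \cc(D_0)$. This $D_0$ is the witness on which the whole argument hinges.

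Next I would show the two inclusions $\cc(D_0) \subset \cc(\{x\}) \subset \cc(D_0)$. For the first, $D_0 \subset [\{x\}]$ gives $\cc(D_0) \subset \cc([\{x\}]) = \cc(\{x\})$ by monotonicity of $\cc$ and the compatibility identity $\cc \circ [\cdot] = \cc$. For the second, from $x \in \cc(D_0)$ I get $\{x\} \subset \cc(D_0)$, whence by monotonicity $\cc(\{x\}) \subset \cc(\cc(D_0)) = \cc(D_0)$, the last equality being the assumed relative-closedness of $D_0$. Hence $\cc(\{x\}) = \cc(D_0)$, which is closed; this is exactly $\cc(\cc(\{x\})) = \cc(\{x\})$, i.e.\ $\{x\}$ is relatively-closed.

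For the principal ideal $[\{x\}]$, compatibility gives $\cc([\{x\}]) = \cc(\{x\})$, and since the right-hand side has just been shown to be closed, $\cc(\cc([\{x\}])) = \cc([\{x\}])$, so $[\{x\}]$ is relatively-closed as well. There is no real obstacle here beyond careful bookkeeping of the two compatibility identities $\cc([A]) = \cc(A)$ and $[\cc(A)] = \cc(A)$; the only step that requires any thought is recognising that inner-regularity applied at $x$ itself (rather than at $\cc(\{x\})$) produces a single $D_0$ that simultaneously dominates and is dominated by $\cc(\{x\})$.
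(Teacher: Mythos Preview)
Your proof is correct and follows essentially the same approach as the paper's: both obtain a relatively-closed witness $D_0 \subset [\{x\}]$ with $x \in \cc(D_0)$ from inner-regularity and then squeeze to get $\cc(\{x\}) = \cc(D_0)$ closed. The only cosmetic difference is that the paper proves the principal-ideal case first (writing the chain $\cc(D) \subset \cc(\downarrow\! x) \subset \cc(\cc(D)) = \cc(D)$) and deduces the singleton case from $\cc(\{x\}) = \cc(\downarrow\! x)$, whereas you do it in the opposite order.
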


\begin{proof}
Let $x \in P$.  
By inner-regularity there is some relatively-closed subset $D \subset \downarrow\!\! x$ such that $x \in \cc(D)$. This implies that $\cc(D) \subset \cc(\downarrow\!\! x) \subset \cc(\cc(D)) = \cc(D)$, so that $\cc(\downarrow\!\! x) = \cc(D)$ is closed, i.e.\ $\downarrow\!\! x$ is relatively-closed. 
Since $\cc(\{x\}) = \cc(\downarrow\!\! x)$ by compatibility of $\cc$ with $[\cdot]$, we also get that $\{ x\}$ is relatively-closed.  
\end{proof}

%\todo{ Comment pourrait-on definir la notion de completeness ? J'ai envie de dire que pour tout relatively-closed subset $D$, il existe $x$ tq $\cc(D) = \cc(x)$. Notons que si $P$ is inner-regular et que $\cc(D) = \cc(x)$, alors $D$ est relatively-closed. Dans un dcpo, je me demande si tout relatively-closed subset a un sup ?? Non car dans un continuous dcpo tout subset est relativeiy-closed! Donc il faut chercher une autre definition pour la completeness... }

\begin{proposition}\label{prop:twohead}
Let $(P, [\cdot], \cc)$ be a continuous enriched closet, and let $\mathrsfs{D}$ be a family of subsets of $P$. 
Then $\cc$ is generated by $\mathrsfs{D}$ if and only if every subset of the form $\twoheaddownarrow x$ can be written as $[D]$ for some $D \in \mathrsfs{D}$. 
In particular, $P$ is inner-regular if and only if every subset of the form $\twoheaddownarrow x$ is relatively-closed. 
\end{proposition}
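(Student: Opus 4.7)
The plan is to prove the main equivalence first and then deduce the ``in particular'' clause as a simple corollary. The essential observation on which both directions hinge is that $\twoheaddownarrow x$ is itself weakly-closed: indeed, as already used in the proof of Theorem~\ref{thm:interp}, one can write $\twoheaddownarrow x = \bigcap_{A\,:\,x \in \cc(A)} [A]$, which exhibits it as an intersection of $[\cdot]$-closed subsets. Hence $[\twoheaddownarrow x] = \twoheaddownarrow x$, which will let us swap $D$ and $[D]$ freely.

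For the forward direction, assume $\cc$ is generated by $\mathrsfs{D}$ and fix $x \in P$. By continuity, $x \in \cc(\twoheaddownarrow x)$, so the generation hypothesis applied to $A = \twoheaddownarrow x$ produces some $D \in \mathrsfs{D}$ with $D \subset [\twoheaddownarrow x] = \twoheaddownarrow x$ and $x \in \cc(D)$. From $x \in \cc(D)$ and the definition of $\ll$, every $y \ll x$ belongs to $[D]$, so $\twoheaddownarrow x \subset [D]$. For the reverse inclusion, monotonicity and idempotency of $[\cdot]$ give $[D] \subset [\twoheaddownarrow x] = \twoheaddownarrow x$. Therefore $\twoheaddownarrow x = [D]$ as desired.

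For the converse, assume that for every $x \in P$ there exists $D_x \in \mathrsfs{D}$ with $\twoheaddownarrow x = [D_x]$. Fix $A \subset P$; the inclusion $\bigcup_{D \subset [A],\, D \in \mathrsfs{D}} \cc(D) \subset \cc(A)$ is immediate from monotonicity and compatibility. Conversely, let $x \in \cc(A)$. By definition of $\ll$ we have $\twoheaddownarrow x \subset [A]$, hence $D_x \subset [D_x] = \twoheaddownarrow x \subset [A]$. Continuity gives $x \in \cc(\twoheaddownarrow x) = \cc([D_x]) = \cc(D_x)$, where the last equality uses compatibility of $\cc$ with $[\cdot]$. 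This places $x$ in the required union.

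Finally, for the ``in particular'' statement, apply the main equivalence with $\mathrsfs{D}$ taken to be the family of relatively-closed subsets. If $\twoheaddownarrow x$ is itself relatively-closed, choose $D = \twoheaddownarrow x$. Conversely, if $\twoheaddownarrow x = [D]$ for some relatively-closed $D$, then $\cc(\twoheaddownarrow x) = \cc([D]) = \cc(D)$ is closed, so $\twoheaddownarrow x$ is relatively-closed. The only nontrivial step is the weak-closedness of $\twoheaddownarrow x$; once that is recognized, the rest is bookkeeping with continuity and the identity $\cc \circ [\cdot] = \cc$.
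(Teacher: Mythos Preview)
Your proof is correct and follows essentially the same route as the paper's: both directions hinge on continuity to get $x \in \cc(\twoheaddownarrow x)$, the identity $[\twoheaddownarrow x] = \twoheaddownarrow x$, and the compatibility $\cc([D]) = \cc(D)$. You are slightly more explicit than the paper in justifying the weak-closedness of $\twoheaddownarrow x$ and in spelling out the ``in particular'' clause, but the argument is the same.
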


\begin{proof}
First assume that $\cc$ is generated by $\mathrsfs{D}$, and let $x \in P$. By continuity we have $x \in \cc(\twoheaddownarrow x)$, so there exists some  subset $D \in \mathrsfs{D}$ such that $D \subset [\twoheaddownarrow x] = \twoheaddownarrow x$ and $x \in \cc(D)$. The latter implies $\twoheaddownarrow x \subset [D]$. So we obtain $\twoheaddownarrow x = [D]$. 

Conversely, assume that every subset of the form $\twoheaddownarrow x$ can be written as $[D]$ for some $D \in \mathrsfs{D}$, and let $A$ be a subset of $P$. If $x \in \cc(A)$, then $\twoheaddownarrow x \subset [A]$, so $D \subset [A]$, where $D \in \mathrsfs{D}$ is such that $[D] = \twoheaddownarrow x$. 
Moreover, $x \in \cc(\twoheaddownarrow x) = \cc([D]) = \cc(D)$. 
This proves that $\cc(A)$ is included in $\bigcup_{D \subset [A]} \cc(D)$ for every subset $A$, i.e.\ that $\cc$ is generated by $\mathrsfs{D}$. 
\end{proof}

%Recall that a family $\mathrsfs{D}$ of subsets is \textit{union-complete} if, for every family $(D_x)_{x \in D}$ of elements of $\mathrsfs{D}$ indexed by some subset $D \in \mathrsfs{D}$ and such that $x \leqslant y \Rightarrow D_x \subset D_y$, we have $\bigcup_{x \in D} D_x \in \mathrsfs{D}$. 
The following result generalizes Novak \cite[Proposition~1.22]{Novak82b}. 

\begin{theorem}
Let $(P, [\cdot], \cc)$ be a continuous enriched closet. 
Then $P$ is strongly continuous if and only if $\cc$ is generated by a union-complete family of relatively-closed weakly-closed subsets (in particular $P$ is inner-regular).  
\end{theorem}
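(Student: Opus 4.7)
The plan is to prove both directions of the equivalence; the backward implication follows quickly from Proposition~\ref{prop:twohead} and Lemma~\ref{lem:interp1}, while the forward implication requires a careful choice of the generating family.

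For the ($\Leftarrow$) direction, assume $\cc$ is generated by a union-complete family $\mathrsfs{D}$ of weakly-closed relatively-closed subsets. I would first observe that every $\twoheaddownarrow x$ lies in $\mathrsfs{D}$: by Proposition~\ref{prop:twohead} there exists $D_x \in \mathrsfs{D}$ with $\twoheaddownarrow x = [D_x]$, and the weak-closedness of $D_x$ forces $D_x = [D_x] = \twoheaddownarrow x$. Fixing $y \in P$, the family $(\twoheaddownarrow x)_{x \in \twoheaddownarrow y}$ lies in $\mathrsfs{D}$, is indexed by $\twoheaddownarrow y \in \mathrsfs{D}$, and is monotone since $x \leqslant x'$ implies $\twoheaddownarrow x \subset \twoheaddownarrow x'$. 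Union-completeness then yields $\twoheaddownarrow\twoheaddownarrow y = \bigcup_{x \in \twoheaddownarrow y} \twoheaddownarrow x \in \mathrsfs{D}$, hence weakly-closed and relatively-closed, and Lemma~\ref{lem:interp1} supplies the desired interpolant for every $x \ll y$. Thus $\ll$ is interpolating and $P$ is strongly continuous.

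For the ($\Rightarrow$) direction, assume $P$ is strongly continuous. I would first establish as a subsidiary claim that $\cc$ is automatically a closure operator: if $y \in \cc(\cc(A))$, so that $\twoheaddownarrow y \subset \cc(A)$, then interpolating each $z \ll y$ through some $w$ with $z \ll w \ll y$ and $w \in \cc(A)$ gives $z \in \twoheaddownarrow w \subset [A]$, so $\twoheaddownarrow y \subset [A]$ and $y \in \cc(A)$ by continuity. Consequently every subset is relatively-closed. My candidate is $\mathrsfs{D} = \{D \subset P : D \text{ is weakly-closed and } D = \twoheaddownarrow D\}$. Every $\twoheaddownarrow x$ lies in $\mathrsfs{D}$ (weakly-closed as the intersection $\bigcap_{A : x \in \cc(A)} [A]$ of weakly-closed subsets, and round by interpolation), so by Proposition~\ref{prop:twohead} the family generates $\cc$, and every member is relatively-closed by the subsidiary claim. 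Roundness passes to any monotone union $U = \bigcup_{x \in D} D_x$ from $\mathrsfs{D}$ via a two-way check using $D_x = \twoheaddownarrow D_x$.

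The hard part will be showing that this $U$ is weakly-closed, which is exactly what union-completeness demands. The inclusion $[U] \subset \cc(U)$ comes for free from the compatibility $[\cc(U)] = \cc(U)$, and each $y \in [U]$ satisfies $\twoheaddownarrow y \subset [U]$; the challenge is to upgrade this to $y \in D_x$ for some $x \in D$. My intended route is to chase $y$ through iterated interpolation and exploit the roundness and weak-closedness of the individual $D_x$, together with the monotone indexing by the round set $D$, to trap $y$ inside a particular $D_x$. Should this direct chase falter, the fallback is to define $\mathrsfs{D}$ instead as the smallest family containing $\{\twoheaddownarrow x : x \in P\}$ and closed under the monotone-union operation of the definition of union-completeness, and then verify by transfinite induction under strong continuity that its members remain weakly-closed and relatively-closed.
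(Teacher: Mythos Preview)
Your backward direction is correct and matches the paper's argument line for line: both use Proposition~\ref{prop:twohead} to force $\twoheaddownarrow x \in \mathrsfs{D}$ via weak-closedness, then apply union-completeness to the monotone family $(\twoheaddownarrow x)_{x \in \twoheaddownarrow y}$ to place $\twoheaddownarrow\twoheaddownarrow y$ in $\mathrsfs{D}$, and conclude with Lemma~\ref{lem:interp1}.

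The forward direction, however, has a genuine gap, and your choice of family is the source of it. The paper does not use your $\mathrsfs{D} = \{D : [D]=D,\ \twoheaddownarrow D = D\}$; it takes the much smaller family $\{\twoheaddownarrow x : x \in P\}$ and argues directly that this family is union-complete (by interpolation), weakly-closed, and relatively-closed. With your larger family the burden becomes showing that an arbitrary monotone union of weakly-closed subsets is weakly-closed, and in a general closet $(P,[\cdot])$ this is simply not available: $[\cdot]$ is not assumed to commute with unions (that would be the Alexandrov case $[\cdot]=\downarrow$). Your proposed ``chase'' does not close the gap. From $y \in [U]$ you obtain $\twoheaddownarrow y \subset [U]$ and hence $y \in \cc(U)$, but iterated interpolation only manufactures more elements of $[U]$; nothing in the hypotheses forces any of them into a specific $D_x$, because membership in $[U]$ does not a priori mean $y \leqslant u$ for some $u \in U$. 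Your fallback has exactly the same defect: at each stage of the transfinite construction you must show that a monotone union of weakly-closed sets is weakly-closed, which is the very statement you could not prove for the original family.

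The paper's route sidesteps this by never asking for weak-closedness of a union: the members of $\{\twoheaddownarrow x : x\in P\}$ are weakly-closed individually (as intersections $\bigcap_{x\in\cc(A)}[A]$), and union-completeness amounts to showing that the relevant monotone union is again of the form $\twoheaddownarrow w$ for a single $w$, which is where the interpolation property is invoked. Your subsidiary claim that $\cc$ becomes idempotent under strong continuity is correct and is implicitly used in the paper as well (to get relative-closedness), but it does not rescue the weak-closedness issue for your family.
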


%\todo{ Est-ce un if and only if ? }

\begin{proof}
Suppose first that $P$ is strongly continuous. 
Then it is not difficult to prove that $\cc$ is generated by the family $\{ \twoheaddownarrow x : x \in P \}$, which is indeed union-complete by the interpolation property, and made of relatively-closed weakly-closed subsets. 

Conversely, let $\mathrsfs{I}$ be a union-complete family of relatively-closed and weakly-closed subsets that generates $\cc$, and let $x \in P$. 
For every $y \in P$ we can write $\twoheaddownarrow y = [I_y]$ for some $I_y \in \mathrsfs{I}$, thanks to Proposition~\ref{prop:twohead}. 
Since every element of $\mathrsfs{I}$ is weakly-closed, this shows that $\twoheaddownarrow y = I_y \in \mathrsfs{I}$. 
Since $\mathrsfs{I}$ is union-complete, we have $I_x' := \bigcup_{y \in I_x} I_y \in \mathrsfs{I}$. 
This implies that $I_x'$ is weakly-closed and relatively-closed. 
Moreover, $I_x' = \bigcup_{y \in I_x} I_y = \bigcup_{y \in I_x} \twoheaddownarrow y = \twoheaddownarrow I_x$. 
Since $\twoheaddownarrow x = I_x$, we get $\twoheaddownarrow \twoheaddownarrow x = \twoheaddownarrow I_x = I_x'$. 
Consequently, $\twoheaddownarrow \twoheaddownarrow x$ is weakly-closed and relatively-closed, for all $x \in P$. 
Now it follows from Lemma~\ref{lem:interp1} that $\ll$ is interpolating, i.e.\ that $P$ is strongly continuous. 
\end{proof}

\begin{example}[Example~\ref{ex:p4} continued]\label{ex:p5}
Let $(P, \leqslant)$ be a poset, let $\mathrsfs{D}$ be the family of directed subsets with supremum, and let $\mathrsfs{I}$ be the family of elements of $\mathrsfs{D}$ that are lower. 
Since $\cc_{\mathrsfs{D}}(I) = \downarrow\!\! \bigvee I$ for every $I \in \mathrsfs{I}$, the elements of $\mathrsfs{I}$ are relatively-closed. In particular, $P$ is inner-regular. Moreover, $\mathrsfs{I}$ is union-complete, since a directed union of directed subsets is directed. 
So if $(P, \leqslant, \cc_{\mathrsfs{D}})$ is continuous, i.e.\ if every $x \in P$ is the directed supremum of elements way-below it, 
then $P$ is strongly continuous by the previous proposition. 
Hence we recover a well-known result of domain theory (see e.g.\ \cite[Theorem~I-1.9]{Gierz03}). 
\end{example}

\begin{proposition}\label{prop:contD}
Let $(P, [\cdot], \cc)$ and $(P', [\cdot], \cc')$ be enriched closets. 
Assume that $\cc$ is generated by a family $\mathrsfs{D}$. 
Then a $[\cdot]$-continuous map $f \colon P \rightarrow P'$ is strictly-continuous if and only if 
$
f(\cc(D)) \subset \cc'(f(D))
$, 
for all $D \in \mathrsfs{D}$. 
\end{proposition}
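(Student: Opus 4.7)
The plan is to prove the two implications separately, with the forward direction being essentially by definition and the converse reducing to a chase through the generation formula combined with compatibility of $\cc'$ with $[\cdot]$.

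For the forward implication, if $f$ is strictly-continuous, then $f(\cc(A)) \subset \cc'(f(A))$ holds for every subset $A \subset P$, so in particular for every $D \in \mathrsfs{D}$. No further work is needed.

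For the converse, suppose $f$ is $[\cdot]$-continuous and $f(\cc(D)) \subset \cc'(f(D))$ for every $D \in \mathrsfs{D}$; let $A \subset P$. Starting from the generation formula
\[
\cc(A) = \bigcup_{D \subset [A]} \cc(D),
\]
where $D$ ranges over the elements of $\mathrsfs{D}$ included in $[A]$, I would apply $f$ on both sides to obtain $f(\cc(A)) = \bigcup_{D \subset [A]} f(\cc(D))$, and then invoke the hypothesis to get $f(\cc(A)) \subset \bigcup_{D \subset [A]} \cc'(f(D))$. The remaining step is to bound each $\cc'(f(D))$ uniformly by $\cc'(f(A))$: since $D \subset [A]$, we have $f(D) \subset f([A])$; by $[\cdot]$-continuity of $f$, $f([A]) \subset [f(A)]$; so monotonicity of $\cc'$ and the compatibility condition $\cc'([f(A)]) = \cc'(f(A))$ built into the definition of the enriched closet $(P', [\cdot], \cc')$ give $\cc'(f(D)) \subset \cc'(f(A))$. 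Putting everything together yields $f(\cc(A)) \subset \cc'(f(A))$, so $f$ is strictly-continuous.

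The argument is entirely routine; the only step requiring care is remembering to apply the compatibility identity $\cc'([B]) = \cc'(B)$ that is part of the definition of an enriched closet, as without it one is stuck with $\cc'([f(A)])$ and cannot conclude. The $[\cdot]$-continuity hypothesis enters at precisely one place, to convert $f([A])$ into a subset of $[f(A)]$.
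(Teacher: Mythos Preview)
Your proof is correct and follows essentially the same route as the paper's: both use the generation formula together with $[\cdot]$-continuity to pass from $f(D) \subset f([A]) \subset [f(A)]$ and then invoke the compatibility identity $\cc'([f(A)]) = \cc'(f(A))$. The only cosmetic difference is that you argue with unions of sets while the paper chases a single element $y \in f(\cc(A))$; the paper also omits the trivial forward implication, which you include.
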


\begin{proof}
Assume that $f$ satisfies the property of the proposition, and let $A$ be a subset of $P$. 
We show that $f(\cc(A)) \subset \cc'(f(A))$. 
So let $y \in f(\cc(A))$. There is some $x \in \cc(A)$ with $y = f(x)$. 
Since the family $\mathrsfs{D}$ generates $\cc$, there is some $D \in \mathrsfs{D}$ such that $D \subset [A]$ and $x \in \cc(D)$. 
By hypothesis we have $f(\cc(D)) \subset \cc'(f(D))$, so that $y \in \cc'(f(D))$. 
Moreover, $f$ is $[\cdot]$-continuous, so with the inclusion $D \subset [A]$ we obtain $f(D) \subset [f(A)]$, so that $\cc'(f(D)) \subset \cc'([f(A)]) = \cc'(f(A))$. 
This shows that $y \in \cc'(f(A))$ for all $y \in f(\cc(A))$, i.e.\ that $f(\cc(A)) \subset \cc'(f(A))$, which is the desired result. 
\end{proof}

%\begin{proposition}
%Let $(P, \leqslant, \cc)$ and $(P', \leqslant, \cc')$ be enriched closets. 
%Assume that $P$ is inner-regular. 
%If a map $f \colon P \rightarrow P'$ preserves relatively-closed subsets, 
%then $f$ is closure-continuous if and only if 
%$
%f(\cc(D)) \subset \cc'(f(D))
%$, 
%for all relatively-closed subsets $D$. 
%\end{proposition}

%\begin{proof}
%Assume that the map $f$ is closure-continuous, and let $D$ be a relatively-closed subset of $P$. 
%Since $f$ is closure-continuous we have 
%\[
%f(\cc(D)) \subset f(\overline{\cc}(D)) \subset \overline{\cc}'(f(D)). 
%\]
%But $f$ preserves relatively-closed subsets, so $f(D)$ is relatively-closed, i.e.\ $\overline{\cc}'(f(D)) = \cc'(f(D))$. 
%This shows that $f(\cc(D)) \subset \cc'(f(D))$ for all relatively-closed subsets $D$. 
%
%Conversely, assume that this latter property is satisfied, and let $A$ be a subset of $P$. 
%Then $f(\cc(A)) = f(\bigcup_{D \subset [A]} \cc(D)) = \bigcup_{D \subset [A]} f(\cc(D)) \subset \bigcup_{D \subset [A]} \cc(f(D))$.  
%\end{proof}

\begin{corollary}
Let $(P, [\cdot], \cc)$ and $(P', [\cdot], \cc')$ be enriched closets. 
Assume that $P$ is inner-regular. 
If a $[\cdot]$-continuous map $f \colon P \rightarrow P'$ preserves relatively-closed subsets, 
then $f$ is strictly-continuous if and only if $f$ is closure-continuous if and only if 
$
f(\cc(D)) \subset \cc'(f(D))
$, 
for all relatively-closed subsets $D$. 
\end{corollary}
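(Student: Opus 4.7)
The plan is to combine Proposition~\ref{prop:contD} (applied to the family of relatively-closed subsets, which generates $\cc$ because $P$ is inner-regular) with Proposition~\ref{prop:eqstrictscott} and with the observation that on a relatively-closed subset $D$ the operators $\cc$ and $\overline{\cc}$ agree.

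First I would establish the equivalence between strict-continuity and the relatively-closed condition. Since $P$ is inner-regular, the preclosure operator $\cc$ is generated by the family $\mathrsfs{D}$ of relatively-closed subsets of $P$. Proposition~\ref{prop:contD} then directly yields that a $[\cdot]$-continuous $f$ is strictly-continuous if and only if $f(\cc(D)) \subset \cc'(f(D))$ for every $D \in \mathrsfs{D}$, i.e.\ for every relatively-closed subset $D$. Note that this step uses neither closure-continuity nor the hypothesis that $f$ preserves relatively-closed subsets.

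Next, strict-continuity always implies closure-continuity by Proposition~\ref{prop:eqstrictscott}, so it remains only to show that closure-continuity (combined with the preservation of relatively-closed subsets) implies the inclusion $f(\cc(D)) \subset \cc'(f(D))$ for every relatively-closed $D$. The key computation is the following: since $D$ is relatively-closed in $P$, $\cc(D)$ is closed, whence $\overline{\cc}(D) = \cc(D)$. Since $f$ preserves relatively-closed subsets, $f(D)$ is relatively-closed in $P'$, so $\overline{\cc}'(f(D)) = \cc'(f(D))$. Applying closure-continuity of $f$ gives
\[
f(\cc(D)) = f(\overline{\cc}(D)) \subset \overline{\cc}'(f(D)) = \cc'(f(D)),
\]
which is exactly the desired inclusion.

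Putting the three implications together closes the loop and yields the triple equivalence. The only subtle point is the last step, where one must recognize that relative-closedness of both $D$ and $f(D)$ is precisely what lets one replace $\overline{\cc}$ by $\cc$ on both sides of the closure-continuity inequality; this is where the hypothesis that $f$ preserves relatively-closed subsets is essential, and without it the implication from closure-continuity to strict-continuity would fail in general.
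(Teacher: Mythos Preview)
Your proof is correct and follows essentially the same approach as the paper. The paper's proof is slightly more terse---it only spells out the implication from closure-continuity to the relatively-closed condition (via the same computation $f(\cc(D)) \subset f(\overline{\cc}(D)) \subset \overline{\cc}'(f(D)) = \cc'(f(D))$) and then cites Proposition~\ref{prop:contD}, leaving the implication from strict-continuity to closure-continuity implicit; your version makes all three links explicit and even sharpens the first inclusion to an equality using relative closedness of $D$, but the substance is identical.
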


\begin{proof}
Assume that $f$ is closure-continuous, and let $D$ be a relatively-closed subset of $P$. 
Since $f$ is closure-continuous we have 
\[
f(\cc(D)) \subset f(\overline{\cc}(D)) \subset \overline{\cc}'(f(D)). 
\]
But $f$ preserves relatively-closed subsets, so $f(D)$ is relatively-closed, i.e.\ $\overline{\cc}'(f(D)) = \cc'(f(D))$. 
It shows that $f(\cc(D)) \subset \cc'(f(D))$ for all relatively-closed subsets $D$. 
So by Proposition~\ref{prop:contD} $f$ is strictly-continuous. 
\end{proof}

%\begin{example}[Example~\ref{ex:p5} continued]\label{ex:p5bis}
%Let $(P, \leqslant)$ and $(P', \leqslant)$ be posets, and let $\mathrsfs{D}$ be the family of directed subsets with supremum in $P$. 
%Considering that $\cc_{\mathrsfs{D}}(D) = \downarrow\!\! \bigvee D$, we can apply the previous corollary and get that an order-preserving map \todo{ $f$ preserves directed subsets, but does $f$ preserve relatively-closed subsets? } $f \colon P \rightarrow P'$ is closure-continuous if and only if $f(\bigvee D) = \bigvee f(D)$ for every directed subset $D$ with supremum. 
%\end{example}

The following result focuses on closure-continuity. 
We can avoid assuming that $f$ be $[\cdot]$-continuous at the price of a few extra hypotheses. 
Especially, we shall suppose that $\cc$ and $[\cdot]$ are \textit{jointly generated} by a family $\mathrsfs{D}$, in the sense that $\cc$ is generated by $\mathrsfs{D}$ and 
\[
[A] = \bigcup_{D \in \mathrsfs{D}, D \subset A} [D], 
\]
for all subsets $A$. 

\begin{proposition}\label{prop:contDD1}
Let $(P, [\cdot], \cc)$ and $(P', [\cdot], \cc')$ be enriched closets, and let $f \colon P \rightarrow P'$. 
Assume that $\cc$ and $[\cdot]$ are jointly generated by a family $\mathrsfs{D}$ such that $f(D)$ is relatively-closed for all $D \in \mathrsfs{D}$. 
Then $f$ is closure-continuous if and only if 
$
f(\cc(D)) \subset \cc'(f(D))
$, 
for all $D \in \mathrsfs{D}$. 
\end{proposition}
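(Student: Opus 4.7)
The plan is to prove each implication separately, using Lemma~\ref{lem:clopen} together with the two halves of the joint generation hypothesis and the easy containment $[D] \subset \cc(D)$.

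For the forward direction, I fix $D \in \mathrsfs{D}$ and note that closure-continuity of $f$ gives $f(\overline{\cc}(D)) \subset \overline{\cc}'(f(D))$. Since $\cc(D) \subset \overline{\cc}(D)$, this already yields $f(\cc(D)) \subset \overline{\cc}'(f(D))$. The hypothesis that $f(D)$ is relatively-closed means $\cc'(f(D))$ is closed, hence coincides with $\overline{\cc}'(f(D))$, which closes this direction.

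For the converse, let $F'$ be a closed subset of $P'$ (so $F' = \cc'(F') = [F']$) and set $F := f^{-1}(F')$. I aim to prove $\cc(F) \subset F$, which by Lemma~\ref{lem:clopen} forces $F$ to be closed and thus $f$ to be closure-continuous. I would proceed in two steps. First, show that $F$ is weakly-closed, i.e.\ $[F] \subset F$: for $y \in [F]$, joint generation of $[\cdot]$ produces some $D'' \in \mathrsfs{D}$ with $D'' \subset F$ and $y \in [D''] \subset \cc(D'')$; applying the given hypothesis to $D''$ yields
\[
f(y) \in f(\cc(D'')) \subset \cc'(f(D'')) \subset \cc'(F') = F',
\]
hence $y \in F$. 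Second, with $[F] = F$ established, the generation of $\cc$ gives $\cc(F) = \bigcup_{D \in \mathrsfs{D},\, D \subset F} \cc(D)$, and for each such $D$ the inclusion $f(D) \subset F'$ combined with the hypothesis yields $f(\cc(D)) \subset \cc'(f(D)) \subset \cc'(F') = F'$, so $\cc(D) \subset F$. Union over $D$ concludes $\cc(F) \subset F$.

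The main obstacle is Step~1 of the converse: since $f$ is not assumed $[\cdot]$-continuous, the weak closedness of $F$ is not automatic, yet it is needed to apply the generation formula for $\cc$ in Step~2. The trick is that joint generation reduces membership in $[F]$ to membership in some $[D'']$ with $D'' \subset F$, after which the routine containment $[D''] \subset \cc(D'')$ lets us replace $[D'']$ by $\cc(D'')$ and invoke the hypothesis precisely on $D'' \in \mathrsfs{D}$; the closedness of $F'$ under $\cc'$ then finishes the job.
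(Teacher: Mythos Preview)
Your proof is correct and follows essentially the same approach as the paper: the forward direction is identical, and for the converse both arguments rely on the two halves of joint generation, the containment $[D] \subset \cc(D)$, and closedness of $F'$. The only difference is organizational: you first show $F = f^{-1}(F')$ is weakly-closed and then upgrade to closed, whereas the paper takes an arbitrary $A \subset F$ and nests the $[\cdot]$-generation step inside the $\cc$-generation step in a single pass via Lemma~\ref{lem:clopen}.
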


\begin{proof}
Suppose that $f(\cc(D)) \subset \cc'(f(D))$ for all $D \in \mathrsfs{D}$. 
To prove that $f$ is closure-continuous, we prove that $f^{-1}(F')$ is closed for every closed subset $F'$. 
So let $A \subset f^{-1}(F')$, and let us show that $\cc(A) \subset f^{-1}(F')$. 
Let $x \in \cc(A)$.
Since $\cc$ is generated by $\mathrsfs{D}$ there is some $D_0 \in \mathrsfs{D}$ such that $D_0 \subset [A]$ and $x \in \cc(D_0)$. 
Then $f(x) \in f(\cc(D_0)) \subset \cc'(f(D_0))$. 
Since $D_0 \subset [A]$, we have $f(D_0) \subset \bigcup_{D \subset A} f([D])$, where $D$ runs over the subsets of $A$ that belong to $\mathrsfs{D}$. 
Moreover, if $D \in \mathrsfs{D}$ and $D \subset A$, then $[D] \subset \cc(D)$, so $f([D]) \subset f(\cc(D)) \subset \cc'(f(D)) \subset \cc'(f(A)) \subset \cc'(F') = F'$. 
This shows that $f(D_0) \subset F'$, so $f(x) \in \cc'(f(D_0)) \subset \cc'(F') = F'$. 
So $x \in f^{-1}(F')$. This proves that $f^{-1}(F')$ is closed. 

Conversely, suppose that $f$ is closure-continuous, and  let $D \in \mathrsfs{D}$. 
Then 
$
f(\cc(D)) \subset f(\overline{\cc}(D)) \subset \overline{\cc}'(f(D)) = \cc'(f(D))
$, 
by relative closedness of $f(D)$. 
\end{proof}

\section{Topological preclosure operators}\label{sec:topology}

A preclosure space $(E, \cc)$ is called a \textit{topological preclosure space} if the open sets form a topology. In this case $\cc$ is called a \textit{topological preclosure operator}. 
An equivalent condition is that the closure operator $\overline{\cc}$ associated with $\cc$ satisfy $\overline{\cc}(\emptyset) = \emptyset$ together with the property 
\[
\overline{\cc}(A \cup B) \subset \overline{\cc}(A) \cup \overline{\cc}(B), 
\]
for all subsets $A, B$ of $E$, i.e.\ that $\overline{\cc}$ be a \textit{Kuratowski closure operator}. 

A subset $R$ of a preclosure space is \textit{irreducible} if $R$ is nonempty and 
\[
R \subset F \cup F' \Rightarrow R \subset F \mbox{ or } R \subset F', 
\]
for all closed subsets $F, F'$. 
Note that a directed union of irreducible subsets is irreducible, and $\cc(R)$ is irreducible whenever $R$ is irreducible. %; if $(P, [\cdot], \cc)$ is an enriched closet, $[R]$ is irreducible if $R$ is irreducible. 

\begin{example}
Let $(P, \leqslant, \cc)$ be an enriched qoset where $\cc$ is the Alexandrov closure operator $A \mapsto \downarrow\!\! A$ (see Example~\ref{ex:alex}). 
Then the irreducible subsets coincide with the directed subsets. 
\end{example}

%\begin{example}
%Let $(P, \leqslant, \cc)$ be an enriched qoset where $\cc$ is the Dedekind--MacNeille closure operator $A \mapsto A^{\uparrow\downarrow}$ (see Example~\ref{ex:alex}). 
%Then a nonempty subset $R$ is irreducible if and only if $R \subset \downarrow\!\! x \cup \downarrow\!\! y$ implies $R \subset \downarrow\!\! x$ or $R \subset \downarrow\!\! y$, for all $x, y \in P$. 
%\end{example}

\begin{proposition}
Let $(P, [\cdot], \cc)$ be an enriched closet. 
Assume that $[\cdot]$ is topological and that $\cc$ is generated by irreducible subsets. 
Then $\cc$ is topological.  
\end{proposition}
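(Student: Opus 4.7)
The plan is to show directly that the family of $\cc$-closed subsets is stable under finite unions (including the empty union); combined with the Moore family property established earlier in the paper, this will prove that the $\cc$-open subsets form a topology. Fix a family $\mathrsfs{I}$ of irreducible subsets that generates $\cc$.

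First I would verify that $\emptyset$ is $\cc$-closed. By the generating formula,
\[
\cc(\emptyset) = \bigcup \{ \cc(D) : D \in \mathrsfs{I},\ D \subset [\emptyset] \}.
\]
Since $[\cdot]$ is topological we have $[\emptyset] = \emptyset$, and every irreducible subset is by definition nonempty, so the indexing set is empty and $\cc(\emptyset) = \emptyset$.

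Next I would show that if $F$ and $F'$ are $\cc$-closed, then so is $F \cup F'$. Given $x \in \cc(F \cup F')$, the generating hypothesis yields some $D \in \mathrsfs{I}$ with $D \subset [F \cup F']$ and $x \in \cc(D)$. Two observations will then collapse this containment: the assumption that $[\cdot]$ is topological gives $[F \cup F'] = [F] \cup [F']$, while compatibility of $\cc$ with $[\cdot]$ (every $\cc$-closed subset is weakly-closed) gives $[F] = F$ and $[F'] = F'$. Hence $D \subset F \cup F'$. Irreducibility of $D$ with respect to $\cc$-closed subsets then forces $D \subset F$ or $D \subset F'$, and monotonicity of $\cc$ together with closedness of $F$ and $F'$ yields $x \in \cc(D) \subset F \cup F'$ in either case.

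The subtle point is the interplay between the two closure structures: the generating family is indexed via $D \subset [A]$ (so with respect to $[\cdot]$), whereas the irreducibility of those $D$ refers to $\cc$-closed subsets. The compatibility axiom $[\cc(A)] = \cc(A) = \cc([A])$ is exactly what bridges these two viewpoints and makes the two reductions above (passing from $[F] \cup [F']$ to $F \cup F'$, and from $D \subset F$ to $\cc(D) \subset F$) legitimate. Once this is recognized, the argument is routine and no further properties of $\mathrsfs{I}$ beyond irreducibility and nonemptiness are needed.
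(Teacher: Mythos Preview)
Your proof is correct and follows essentially the same route as the paper's: both arguments verify $\cc(\emptyset)=\emptyset$ via the nonemptiness of irreducible subsets (you make the use of $[\emptyset]=\emptyset$ from the topological hypothesis on $[\cdot]$ more explicit), and both reduce closure of $F\cup F'$ to finding an irreducible $D\subset[F\cup F']=[F]\cup[F']=F\cup F'$ and then invoking irreducibility. The only cosmetic difference is that the paper writes $[F]\cup[F']\subset\cc(F)\cup\cc(F')=F\cup F'$ where you invoke weak-closedness of $\cc$-closed sets directly; these are the same observation.
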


\begin{proof}
Let $F$ and $F'$ be closed subsets of $P$, and let $x \in \cc(F \cup F')$. 
Then $x \in \cc(R)$ for some irreducible subset $R \subset [F \cup F']$. 
Since $[\cdot]$ is topological by assumption, we have $[F \cup F'] = [F] \cup [F'] \subset \cc(F) \cup \cc(F') = F \cup F'$, so $R \subset F \cup F'$. 
Now $R$ is irreducible, so we can suppose that $R \subset F$ without loss of generality. 
Thus, $x \in \cc(R) \subset \cc(F) = F$. 
This proves that the union of two closed subsets is closed, or that the intersection of two open subsets is open. 
We already know that open subsets are stable by arbitrary unions, and that $\emptyset$ is open. So to prove that the open subsets form a topology, it remains to show that $P$ is open, i.e.\ that $\cc(\emptyset) = \emptyset$. This follows from the fact that $\cc$ is generated by irreducible subsets, and that irreducible subsets are nonempty by definition. 
\end{proof}

%\begin{corollary}
%Let $(P, \leqslant, \cc)$ be a continuous enriched qoset. 
%Assume that $\cc$ is generated by a union-complete family of relatively-closed irreducible subsets. 
%Then $P$ is strongly continuous and $\cc$ is topological. 
%\end{corollary}

%\begin{proof}
%The closure operator $[\cdot] := \downarrow\!\! \cdot$ is topological, so the previous proposition applies. 
%\end{proof}

\begin{example}[Example~\ref{ex:p5} continued]\label{ex:p6}
Let $(P, \leqslant)$ be a poset, and let $\mathrsfs{D}$ be the family of directed subsets with supremum. 
We show that the preclosure $\cc_{\mathrsfs{D}}$ is generated by irreducible subsets. 
For this it suffices to prove that every directed subset $D$ is irreducible. 
Indeed, suppose that $D \subset F \cup F'$ with $F, F'$ closed. If $D \not\subset F$ and $D \not\subset F'$, then there are some $x \in D\setminus F$ and $x' \in D\setminus F'$. 
The subset $D$ is directed so there exists some $x'' \in D$ such that $x \leqslant x''$ and $x' \leqslant x''$. 
Since $D \subset F \cup F'$ we have $x'' \in F$ or $x'' \in F'$, we get $x \in \downarrow\!\! F$ or $x' \in \downarrow\!\! F'$. 
Closed subsets $F$ and $F'$ are lower, so $x \in F$ or $x' \in F'$, a contradiction. 
We have shown that $\cc_{\mathrsfs{D}}$ is topological (the topology made up of the open subsets is known as the \textit{Scott topology}). 
\end{example}

\begin{example}[Example~\ref{ex:c5} continued]\label{ex:c6}
Let $(P, \leqslant)$ be a qoset, and $c \colon P \rightarrow P$ be an order-preserving and inflationary map. 
The preclosure $\cc_{c}$ is generated by singletons, which are irreducible subsets, hence is topological. 
More than that, in this situation every union of closed subsets is closed. 
\end{example}

%\begin{example}[Example~\ref{ex:phi4} continued]\label{ex:phi5}
%Let $(P, \leqslant)$ be a qoset, and let $\Phi$ be a family of order-preserving self-maps on $P$ containing at least one deflationary map. 
%Then $\cc_{\Phi}$ is generated by the family of principal ideals $\downarrow\!\! a$, $a \in P$. 
%Since principal ideals are irreducible subsets, $\cc_{\Phi}$ is topological. 
%\end{example}

\begin{proposition}
Let $(P, [\cdot], \cc)$ be a continuous enriched closet. 
If $\cc$ is topological then $\cc$ is generated by irreducible subsets. 
\end{proposition}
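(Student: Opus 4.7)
The plan is to prove that every subset of the form $\twoheaddownarrow x$, $x \in P$, is irreducible; since $(P,[\cdot],\cc)$ is continuous, $\cc$ is already generated by the family $\{\twoheaddownarrow x : x \in P\}$ (as noted just after the definition of inner-regularity: whenever $x \in \cc(A)$, the definition of the way-below relation gives $\twoheaddownarrow x \subset [A]$, while continuity gives $x \in \cc(\twoheaddownarrow x)$). Once each $\twoheaddownarrow x$ is shown to be irreducible, it will follow that $\cc$ is generated by irreducible subsets.

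The first small step is nonemptiness. Because $\cc$ is topological, $\overline{\cc}(\emptyset) = \emptyset$, and therefore $\cc(\emptyset) = \emptyset$ as well. If $\twoheaddownarrow x$ were empty, continuity would force $x \in \cc(\twoheaddownarrow x) = \cc(\emptyset) = \emptyset$, a contradiction.

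For the irreducibility itself I would argue by contradiction. Suppose $\twoheaddownarrow x \subset F \cup F'$ for closed $F, F'$ while $\twoheaddownarrow x \not\subset F$ and $\twoheaddownarrow x \not\subset F'$. Set $G := P \setminus F$ and $G' := P \setminus F'$, and pick $y \in \twoheaddownarrow x \cap G$ and $y' \in \twoheaddownarrow x \cap G'$. By Proposition~\ref{prop:scott}, every open subset of a continuous enriched closet is way-upper; thus $y \ll x$ together with $y \in G = \twoheaduparrow G$ yields $x \in G$, and symmetrically $x \in G'$, whence $x \in G \cap G'$.

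This is where the topological hypothesis intervenes: topologicality makes $G \cap G'$ open. Combining $x \in \cc(\twoheaddownarrow x) \cap (G \cap G')$ with Lemma~\ref{lem:clopen} then yields $\twoheaddownarrow x \cap (G \cap G') \neq \emptyset$, contradicting $\twoheaddownarrow x \subset F \cup F' = P \setminus (G \cap G')$. The main obstacle is really recognizing how the hypothesis is used: topologicality is needed precisely so that the intersection $G \cap G'$ of the two complements is open, which lets the continuity/Lemma~\ref{lem:clopen} argument push the separation witnessed at $x$ down to some element way-below $x$; without topologicality, $G \cap G'$ need not be open and the last step collapses.
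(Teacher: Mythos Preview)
Your proof is correct and follows the same overall plan as the paper: show that each $\twoheaddownarrow x$ is irreducible, then use that continuity makes $\cc$ generated by the family $\{\twoheaddownarrow x : x \in P\}$. The only difference is in how irreducibility is established. The paper works on the closed side: since $\cc$ is topological, $F \cup F'$ is closed, hence $x \in \cc(\twoheaddownarrow x) \subset F \cup F'$; whichever of $F$, $F'$ contains $x$ then contains $\twoheaddownarrow x$ directly from the definition of $\ll$ (because $x \in \cc(F)$ forces $\twoheaddownarrow x \subset [F] = F$). Your argument is the dual formulation, passing to the open complements and using that $G \cap G'$ is open; it is valid but slightly less economical, since it invokes Proposition~\ref{prop:scott} and Lemma~\ref{lem:clopen} where the paper needs only the bare definition of way-below. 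Incidentally, your intermediate step $x \in G$ already follows from $G$ being weakly-open (hence an upper set) and $y \leqslant x$; the way-upper property from Proposition~\ref{prop:scott} is not needed there.
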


\begin{proof}
We first show that every subset of the form $\twoheaddownarrow x$ is irreducible. 
Note that $\twoheaddownarrow x$ is nonempty: if $\twoheaddownarrow x = \emptyset$ then $x \in \cc(\emptyset) = \emptyset$, a contradiction. 
Now assume that $\twoheaddownarrow x \subset F \cup F'$ for some closed subsets $F, F'$. 
Since $P$ is continuous and $\cc$ is topological we have $x \in \cc(\twoheaddownarrow x) \subset F \cup F'$. 
Suppose without loss of generality that $x \in F$. 
Then $x \in \cc(F)$, so $\twoheaddownarrow x \subset [F] \subset \cc(F) = F$. 
This shows that $\twoheaddownarrow x$ is irreducible. 

Once one has seen that the continuity of $P$ implies that $\cc$ is generated by the subsets of the form $\twoheaddownarrow x$, the proof is complete. 
\end{proof}

\begin{acknowledgements}
I would like to gratefully thank Marcel Ern\'e for his encouragements and comments on a previous version of the manuscript. 
\end{acknowledgements}

\bibliographystyle{plain}
%\bibliography{../../bibliotex/BIBLIO}

\begin{thebibliography}{10}

\bibitem{Abramsky94}
Samson Abramsky and Achim Jung.
\newblock Domain theory.
\newblock In {\em Handbook of logic in computer science, Vol.\ 3}, volume~3 of
  {\em Handb. Log. Comput. Sci.}, pages 1--168. Oxford University Press, New
  York, 1994.

\bibitem{Akian99}
Marianne Akian.
\newblock Densities of idempotent measures and large deviations.
\newblock {\em Trans. Amer. Math. Soc.}, 351(11):4515--4543, 1999.

\bibitem{Bandelt82}
Hans-J. Bandelt.
\newblock {$M$}-distributive lattices.
\newblock {\em Arch. Math. (Basel)}, 39(5):436--442, 1982.

\bibitem{Bandelt83}
Hans-J. Bandelt and Marcel Ern{\'e}.
\newblock The category of {$Z$}-continuous posets.
\newblock {\em J. Pure Appl. Algebra}, 30(3):219--226, 1983.

\bibitem{Bandelt84}
Hans-J. Bandelt and Marcel Ern{\'e}.
\newblock Representations and embeddings of {$M$}-distributive lattices.
\newblock {\em Houston J. Math.}, 10(3):315--324, 1984.

\bibitem{Baranga96}
Andrei Baranga.
\newblock {$Z$}-continuous posets.
\newblock {\em Discrete Math.}, 152(1-3):33--45, 1996.

\bibitem{Edalat95b}
Abbas Edalat.
\newblock Domain theory and integration.
\newblock {\em Theoret. Comput. Sci.}, 151(1):163--193, 1995.
\newblock Topology and completion in semantics (Chartres, 1993).

\bibitem{Erne81}
Marcel Ern{\'e}.
\newblock A completion-invariant extension of the concept of continuous
  lattices.
\newblock In {\em Continuous lattices, Proceedings of the Conference on
  Topological and Categorical Aspects of Continuous Lattices (Workshop IV) held
  at the University of Bremen, Bremen, November 9--11, 1979}, volume 871 of
  {\em Lecture Notes in Mathematics}, pages 45--60, Berlin, 1981.
  Springer-Verlag.
\newblock Edited by Bernhard Banaschewski and Rudolf-E. Hoffmann.

\bibitem{Erne99}
Marcel Ern{\'e}.
\newblock {$Z$}-continuous posets and their topological manifestation.
\newblock {\em Appl. Categ. Structures}, 7(1-2):31--70, 1999.
\newblock Applications of ordered sets in computer science (Braunschweig,
  1996).

\bibitem{Erne04}
Marcel Ern\'e.
\newblock The polarity between approximation and distribution.
\newblock In {\em Galois connections and applications}, volume 565 of {\em
  Math. Appl.}, pages 173--210. Kluwer Acad. Publ., Dordrecht, 2004.

\bibitem{Erne09}
Marcel Ern{\'e}.
\newblock Closure.
\newblock In {\em Beyond topology}, volume 486 of {\em Contemp. Math.}, pages
  163--238. Amer. Math. Soc., Providence, RI, 2009.

\bibitem{Erne06}
Marcel Ern{\'e}, Mai Gehrke, and Al\v{e}s Pultr.
\newblock Complete congruences on topologies and down-set lattices.
\newblock {\em Appl. Categ. Structures}, 15(1-2):163--184, 2007.

\bibitem{Erne83}
Marcel Ern{\'e} and G.~Wilke.
\newblock Standard completions for quasiordered sets.
\newblock {\em Semigroup Forum}, 27(1-4):351--376, 1983.

\bibitem{Gierz03}
Gerhard Gierz, Karl~Heinrich Hofmann, Klaus Keimel, Jimmie~D. Lawson,
  Michael~W. Mislove, and Dana~S. Scott.
\newblock {\em Continuous lattices and domains}, volume~93 of {\em Encyclopedia
  of Mathematics and its Applications}.
\newblock Cambridge University Press, Cambridge, 2003.

\bibitem{Goubault13}
Jean Goubault-Larrecq.
\newblock {\em Non-{H}ausdorff topology and domain theory}, volume~22 of {\em
  New Mathematical Monographs}.
\newblock Cambridge University Press, Cambridge, 2013.
\newblock [On the cover: Selected topics in point-set topology].

\bibitem{Han89}
Y.~H. Han, S.~S. Hong, C.~K. Lee, and P.~U. Park.
\newblock A generalization of continuous posets.
\newblock {\em Comm. Korean Math. Soc.}, 4(1):129--138, 1989.

\bibitem{Ho09}
Weng~Kin Ho and Dongsheng Zhao.
\newblock Lattices of {S}cott-closed sets.
\newblock {\em Comment. Math. Univ. Carolin.}, 50(2):297--314, 2009.

\bibitem{Howroyd00}
John~D. Howroyd.
\newblock A domain-theoretic approach to integration in {H}ausdorff spaces.
\newblock {\em LMS J. Comput. Math.}, 3:229--273 (electronic), 2000.

\bibitem{Jonasson98}
Johan Jonasson.
\newblock On positive random objects.
\newblock {\em J. Theoret. Probab.}, 11(1):81--125, 1998.

\bibitem{Lawson04b}
Jimmie~D. Lawson.
\newblock Idempotent analysis and continuous semilattices.
\newblock {\em Theoret. Comput. Sci.}, 316(1-3):75--87, 2004.

\bibitem{Lawson03}
Jimmie~D. Lawson and Bin Lu.
\newblock Riemann and {E}dalat integration on domains.
\newblock {\em Theoret. Comput. Sci.}, 305(1-3):259--275, 2003.
\newblock Topology in computer science (Schlo{\ss} Dagstuhl, 2000).

\bibitem{Mao06}
Xuxin Mao and Luoshan Xu.
\newblock Quasicontinuity of posets via {S}cott topology and sobrification.
\newblock {\em Order}, 23(4):359--369 (2007), 2006.

\bibitem{Markowsky81a}
George Markowsky.
\newblock A motivation and generalization of {S}cott's notion of a continuous
  lattice.
\newblock In {\em Continuous lattices, Proceedings of the Conference on
  Topological and Categorical Aspects of Continuous Lattices (Workshop IV) held
  at the University of Bremen, Bremen, November 9--11, 1979}, volume 871 of
  {\em Lecture Notes in Mathematics}, pages 298--307, Berlin, 1981.
  Springer-Verlag.
\newblock Edited by Bernhard Banaschewski and Rudolf-E. Hoffmann.

\bibitem{Martin06}
Keye Martin and Prakash Panangaden.
\newblock A domain of spacetime intervals in general relativity.
\newblock {\em Comm. Math. Phys.}, 267(3):563--586, 2006.

\bibitem{Martinez72}
Jorge Martinez.
\newblock Unique factorization in partially ordered sets.
\newblock {\em Proc. Amer. Math. Soc.}, 33:213--220, 1972.

\bibitem{Nelson81}
Evelyn Nelson.
\newblock {$Z$}-continuous algebras.
\newblock In {\em Continuous lattices, Proceedings of the Conference on
  Topological and Categorical Aspects of Continuous Lattices (Workshop IV) held
  at the University of Bremen, Bremen, November 9--11, 1979}, volume 871 of
  {\em Lecture Notes in Mathematics}, pages 315--334, Berlin, 1981.
  Springer-Verlag.
\newblock Edited by Bernhard Banaschewski and Rudolf-E. Hoffmann.

\bibitem{Norberg89}
Tommy Norberg.
\newblock Existence theorems for measures on continuous posets, with
  applications to random set theory.
\newblock {\em Math. Scand.}, 64(1):15--51, 1989.

\bibitem{Norberg97}
Tommy Norberg.
\newblock On the convergence of probability measures on continuous posets.
\newblock In {\em Probability and lattices}, volume 110 of {\em CWI Tract},
  pages 57--92. Math. Centrum Centrum Wisk. Inform., Amsterdam, 1997.

\bibitem{Novak82a}
Dan Novak.
\newblock Generalization of continuous posets.
\newblock {\em Trans. Amer. Math. Soc.}, 272(2):645--667, 1982.

\bibitem{Novak82b}
Dan Novak.
\newblock On a duality between the concepts ``finite'' and ``directed''.
\newblock {\em Houston J. Math.}, 8(4):545--563, 1982.

\bibitem{Poncet11}
Paul Poncet.
\newblock {\em Infinite-dimensional idempotent analysis: the role of continuous
  posets}.
\newblock PhD thesis, \'Ecole Polytechnique, Palaiseau, France, 2011.

\bibitem{Raney52}
George~N. Raney.
\newblock Completely distributive complete lattices.
\newblock {\em Proc. Amer. Math. Soc.}, 3:677--680, 1952.

\bibitem{Raney53}
George~N. Raney.
\newblock A subdirect-union representation for completely distributive complete
  lattices.
\newblock {\em Proc. Amer. Math. Soc.}, 4:518--522, 1953.

\bibitem{Scott72}
Dana~S. Scott.
\newblock Continuous lattices.
\newblock In {\em Toposes, algebraic geometry and logic (Conf., Dalhousie
  Univ., Halifax, N. S., 1971)}, pages 97--136. Lecture Notes in Math., Vol.
  274, Berlin, 1972. Springer.

\bibitem{Venugopalan86}
Poothampilly Venugopalan.
\newblock {$Z$}-continuous posets.
\newblock {\em Houston J. Math.}, 12(2):275--294, 1986.

\bibitem{Wright78}
Jesse~B. Wright, Eric~G. Wagner, and James~W. Thatcher.
\newblock A uniform approach to inductive posets and inductive closure.
\newblock {\em Theoret. Comput. Sci.}, 7(1):57--77, 1978.

\bibitem{Xu95}
Xiao-Quan Xu.
\newblock Construction of homomorphisms of {$M$}-continuous lattices.
\newblock {\em Trans. Amer. Math. Soc.}, 347(8):3167--3175, 1995.

\end{thebibliography}

\def\cprime{$'$} \def\cprime{$'$} \def\cprime{$'$} \def\cprime{$'$}
  \def\ocirc#1{\ifmmode\setbox0=\hbox{$#1$}\dimen0=\ht0 \advance\dimen0
  by1pt\rlap{\hbox to\wd0{\hss\raise\dimen0
  \hbox{\hskip.2em$\scriptscriptstyle\circ$}\hss}}#1\else {\accent"17 #1}\fi}
  \def\ocirc#1{\ifmmode\setbox0=\hbox{$#1$}\dimen0=\ht0 \advance\dimen0
  by1pt\rlap{\hbox to\wd0{\hss\raise\dimen0
  \hbox{\hskip.2em$\scriptscriptstyle\circ$}\hss}}#1\else {\accent"17 #1}\fi}

\end{document}